\renewcommand{\@titlefoot}{}
\title[Lower Bounds for Differential Privacy Under Continual Observation]{Lower Bounds for Differential Privacy Under Continual Observation and Online Threshold Queries}
\author{%
 \Name{Edith Cohen} \Email{edith@cohenwang.com}\\
 \addr Google Research and Tel Aviv University
 \AND
 \Name{Xin Lyu} \Email{lyuxin1999@gmail.com}\\
 \addr UC Berkeley and Google Research
 \AND
 \Name{Jelani Nelson} \Email{minilek@alum.mit.edu}\\
 \addr UC Berkeley and Google Research
 \AND
 \Name{Tam\'{a}s Sarl\'{o}s} \Email{stamas@google.com}\\
 \addr Google Research
 \AND
 \Name{Uri Stemmer} \Email{u@uri.co.il}\\
 \addr Tel Aviv University and Google Research
}
\DeclareMathSymbol{\N}{\mathbin}{AMSb}{"4E}
\DeclareMathSymbol{\Z}{\mathbin}{AMSb}{"5A}
\DeclareMathSymbol{\R}{\mathbin}{AMSb}{"52}
\DeclareMathSymbol{\Q}{\mathbin}{AMSb}{"51}
\def\error{\mbox{\rm error}}
\newcommand{\calA}{\mathcal{A}}
\newcommand{\AAA}{\mathcal{A}}
\newcommand{\XXX}{\mathcal{X}}
\newcommand{\MMM}{\mathcal{M}}
\newcommand{\calB}{\mathcal{B}}
\newcommand{\BBB}{\mathcal{B}}
\newcommand{\Releases}{\texttt{Releases}}
\newcommand{\eps}{\varepsilon}
\def\Lap{\textsf{Lap}}
\def\Sim{\textsf{Sim}}
\def\Helper{\textsf{Helper}}
\def\Uniform{\textsf{Uniform}}
\def\Bern{\textsf{Bern}}
\def\Geom{\textsf{Geom}}
\def\NegBinomial{\textsf{NegBinomial}}
\def\supp{\mathsf{supp}}
\newtheorem{claim}[theorem]{Claim}
\newtheorem{observation}[theorem]{Observation}
\newtheorem{problem}[theorem]{Problem}
\newcommand{\edith}[1]{{\color{purple} Edith: #1}}
\newcommand{\remove}[1]{}
\begin{document}

\maketitle

\begin{abstract}
One of the most basic problems for studying the ``price of privacy over time'' is the so called {\em private counter problem}, introduced by Dwork et al.\ (2010) and Chan et al.\ (2011). 
In this problem, we aim to track the number of {\em events} that occur over time, while hiding the existence of every single event. More specifically, in every time step $t\in[T]$ we learn (in an online fashion) that $\Delta_t\geq 0$ new events have occurred, and must respond with an estimate $n_t\approx\sum_{j=1}^t \Delta_j$. The privacy requirement is that {\em all of the outputs together}, across all time steps, satisfy {\em event level} differential privacy. 

The main question here is how our error needs to depend on the total number of time steps $T$ and the total number of events $n$. Dwork et al.\ (2015) showed an upper bound of $O\left(\log(T)+\log^2(n)\right)$, and Henzinger et al.\ (2023) showed a lower bound of $\Omega\left(
\min\{\log n, \log T\}
\right)$. We show a new lower bound of $\Omega\left(\min\{n,\log T\}\right)$, which is tight w.r.t.\ the dependence on $T$, and is tight in the sparse case where $\log^2 n=O(\log T)$. Our lower bound has the following implications:
\begin{itemize}
    \item We show that our lower bound extends to the {\em online thresholds} problem, where the goal is to privately answer many ``quantile queries'' when these queries are presented one-by-one. This resolves an open question of Bun et al.\ (2017).
    \item Our lower bound implies, for the first time, a separation between the number of mistakes obtainable by a private online learner and a non-private online learner. This partially resolves a COLT'22 open question published by Sanyal and Ramponi.
    \item Our lower bound also yields the first separation between the standard model of private online learning and a recently proposed relaxed variant of it, called {\em private online prediction}.

\end{itemize}

\end{abstract}

\section{Introduction}
Differential privacy (DP), introduced by \cite{DMNS06}, is a mathematical definition for privacy that aims to enable statistical analysis of datasets while providing strong guarantees that individual-level information does not leak. Most of the academic research on differential privacy has focused on the ``offline'' setting, where we analyze a dataset, release the outcome (in a privacy preserving manner), and conclude the process. In contrast, most of the industrial applications of differential privacy are ``continual'' and involve re-collecting and re-aggregating users' data, say on a daily basis. 
In this work we study one of the most popular models for arguing about this, called the {\em continual observation model} of differential privacy \citep{DBLP:conf/focs/DworkRV10,chan2011private}. Algorithms in this model receive a stream of sensitive inputs and update their outcome as the stream progresses while guaranteeing differential privacy. This model abstracts situations where the published statistics need to be continually updated while sensitive information is being gathered. 

Before describing our new results, we define our setting more precisely. Consider an algorithm $\AAA$ that on every time step $t\in[T]$ obtains a (potentially empty) dataset $D_t\in X^*$ and outputs an answer $y_t$. We say that $\AAA$ solves a problem $P:X^*\rightarrow\R$ if for every sequence of datasets $D=(D_1,\dots,D_T)$, with high probability over $\AAA$'s coins, for every $t\in[T]$ it holds that $y_t\approx P(D_1\circ\dots\circ D_t)\triangleq p_t$.
We measure quality by the $\ell_\infty$ norm, that is, for a sequence of outputs $y=(y_1,\dots,y_T)$ we denote  $\error_P(D,y)=\max_{t\in[T]}\left|y_t-p_t\right|$.
The privacy requirement is that the {\em sequence of all $T$ outputs} should be differentially private w.r.t.\ $D$.
That is, the outcome distribution of the $y$ should be insensitive to a change limited to one elements in one of the datasets $D_t$. 
Formally,

\begin{definition}[\cite{DMNS06}]\label{def:DP}
Given a sequence of datasets $D=(D_1,\dots,D_T)$ and an algorithm $\AAA$, we write $\Releases_{\AAA}(D)$ to denote the sequence of $T$ outcomes returned by $\AAA$ during the execution on $D$. Algorithm $\calA$ is {\em $(\eps,\delta)$-differentially private} if for any {\em neighboring}\footnote{$D=(D_1,\dots,D_T)$ and $D'=(D'_1,\dots,D'_T)$ are neighboring if $D_t=D'_t$ for all but at most index $t^*$ for which $D_{t^*}$ could be obtained from $D'_{t^*}$ by adding/removing one element.} sequences of datasets $D,D'$, and for any outcome set $F\subseteq \R^T$ it holds that
$
\Pr[\Releases_{\AAA}(D)\in F]\leq e^{\eps}\cdot\Pr[\Releases_{\AAA}(D')\in F]+\delta.
$  The case where $\delta=0$ is referred to as \emph{pure} DP and the case where $\delta>0$ as \emph{approximate} DP.
\end{definition}

\begin{remark}
Typically one aims for a stronger privacy definition in which the neighboring datasets are not fixed in advanced, and can be adaptively chosen as the execution progresses. Since in this work we show {\em negative} results, then focusing on the above ``non-adaptive'' privacy definition only makes our results stronger.
\end{remark}

An algorithm that operates in the continual observation model must overcome two main challenges: (i) The algorithm is required to solve $T$ instances of the underlying problem $P$, rather than just a single instance, and (ii) The algorithm receives its input in an online manner, which may introduce larger error than what is required when the algorithm receives the entire sequence at once. So far, all known hardness results for the continual observation model were based on challenge (i) above. To emphasize this, let us consider the following ``offline'' variant of the continual observation model:

\begin{definition}[The offline observation model] Let $P:X^*\rightarrow\R$ be a problem.
In the {\em offline observation model}, the algorithm gets in the beginning of the execution an entire sequence of datasets $D=(D_1,\dots,D_T)$, and aims to return $T$ solutions $y_1,...,y_T$ such that $y_t\approx P(D_1\circ\dots\circ D_t)$ for every $t\in[T]$.
\end{definition}

So far, all prior hardness results for the continual observation model are, in fact, hardness results for this offline observation model. 
On the one hand, this makes these prior results stronger, as the offline observation model is strictly easier than the continual observation model. On the other hand, such lower bounds cannot capture the hardness that steams from the online nature of the continual observation model. As we show in this work, there are important cases where in order to show strong lower bounds for the continual observation model we {\em must} differentiate it from the offline observation model.\footnote{We remark that in other models of computation there are known hardness result that differentiate between similar offline and online models, most notably by \cite{BunSU:SODA2017}.}
Furthermore, we show this for what is arguably the most well-studied problem in the continual observation model -- the {\em private counter} problem. In this problem, the data domain is $X=\{0,1\}$ and the function $P$ simply counts the number of ones in the datasets received so far. Alternatively, it is common to interpret the input as a sequence of non-negative numbers $\Delta=(\Delta_1,\dots,\Delta_T)\in \mathbb{Z}_{\geq 0}^T$, where $\Delta_t$ is the number of {\em events} that occurred at time $t$, and the goal is to track (and publish) the prefix sums $n_t = \sum_{i=1}^t \Delta_i$ for $t\in [T]$.

\subsection{Prior Works on the Counter Problem}

\paragraph{Upper bounds.}
\cite{DworkNPR10} and \cite{chan2011private} presented the celebrated \emph{binary tree mechanism} for the online counter problem, which guarantees error at most
$O_{\eps}(\log(T) + \log^{2.5} n)$, where $n=\sum_{t=1}^T \Delta_t$ is the total number of events throughout the execution (via the analysis of \cite{asiacrypt-2015-27369}). In the offline observation model, the private counter problem can be solved with error $\tilde{O}_{\eps,\delta}((\log^* T) ( \mathrm{polylog}(n)))$ via algorithms for the related problem of {\em threshold sanitization} (a.k.a.\ CDF estimation) \citep{CLNSS:STOC2023,KaplanLMNS20}. So, in terms of the dependence in the time horizon $T$, the private counter problem can be solved with error $O(\log T)$ in the online model and $\tilde{O}(\log^* T)$ in the offline model.

\paragraph{Lower bounds.}
There are three prior lower bounds for the private counter problem, all of which hold also for the offline model: 
\begin{enumerate}[itemsep=0px]
    \item[(1)] \cite{DworkNPR10} showed a lower bound of $\Omega(\log T)$ that holds only for \emph{pure}-DP algorithms.
    \item[(2)] For approximate DP, a lower bound of $\Omega(\log^* T)$ follows from the results of~\cite{DBLP:conf/focs/BunNSV15}, who showed a lower bound for the simpler {\em private median} problem. Thus, in the offline setting, the dependence on $T$ of $\tilde{\Theta}(\log^* T)$ is tight.
    \item[(3)] \cite{HenzingerUU:SODA2023} recently established a lower bound of $\Omega(\min\{\log T, \log n\})$ on the $\ell_2$ error, which implies a lower bound of $\Omega(\min\{\log T,\log n\})$ on the $\ell_\infty$ error. 
\end{enumerate}

Our focus in this work is on lower bounds on the $\ell_\infty$ error of approximate-DP online counters.  To summarize our knowledge from prior works: The known upper bound is $O(\min\{n,\log T +\log^2 n\})$, combining the bound of~\cite{asiacrypt-2015-27369} with the observation that we can trivially obtain an error of $n$ by reporting $\tilde{n}_t = 0$ for $t\in [T]$). The prior lower bound is
$\Omega(\min\{\log T, \log n\})$ by \cite
{HenzingerUU:SODA2023}. Note that there is a gap: In the common regime where $n\leq T$ the upper bound is $O(\log T + \log^2n)$ while the lower bound is only $\Omega(\log n)$. Furthermore, as the lower bound of~\cite{HenzingerUU:SODA2023} applies in the offline setting, its dependence on $T$ cannot be tightened, since the offline setting has an upper bound of $\tilde{O}((\log^* T) \mathrm{polylog}(n))$. 

\begin{remark}
Most applications for the counter problem operate in the sparse regime where $n\ll T$. For example, in tracking the view counts of YouTube videos, popular videos like those of Taylor Swift may experience rapid increases in views, while the majority of videos tend to exhibit slower growth (that is, small $n$). This typical pattern is broadly known as the long-tail phenomena.
In this use case, a counter is maintained for each video. We may need frequent refreshes (e.g. large $T$), for example, so that we can detect sudden surges quickly and verify the content for compliance. 
\end{remark}

\subsection{Our Contributions}

We present a lower bound of $\Omega(\min\{n,\log T\})$ on the $\ell_\infty$ error of approximate-DP online counting. This bound is tight in terms of the dependence on $T$ and is tight overall for sparse inputs,
where $\log^2 n = O(\log T)$. Formally,
\begin{theorem} \label{counterLB:thm}
    For any $T\ge 1$, let $n \leq \frac{1}{2}\log(T)$ and $\delta \leq \frac{1}{100n}$. Any $(0.1, \delta)$-DP algorithm for the online counter problem with $n$ events over $T$ time steps must incur $\ell_\infty$-error of $\Omega(n)$ with constant probability.
\end{theorem}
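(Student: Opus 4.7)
The plan is to establish the bound via an adaptive adversarial construction that crucially exploits the online structure, since the naive group-privacy argument would lose an $e^{\eps n}$ factor and is far too weak in the regime $\eps=\Theta(1)$, $n=\Theta(\log T)$. I would consider a family of $2^n$ candidate input sequences $\{D_s\}_{s\in\{0,1\}^n}$, each with exactly $n$ events, defined by a recursive binary partition of the time horizon: the $i$-th event of $D_s$ is placed inside a sub-interval of length $T/2^{i-1}$ determined by the prefix $s_1,\ldots,s_i$. Since $n\le\frac{1}{2}\log T$, each recursion level has room to split, and each leaf sub-interval has length $\ge\sqrt{T}$. Assuming for contradiction that $\calA$ is $(0.1,\delta)$-DP with $\ell_\infty$ error at most $cn$ (for a sufficiently small constant $c$) with probability $\ge 2/3$, the output trajectory $(y_1,\ldots,y_T)$ admits an explicit decoder $\Phi$ that recovers $s$ with probability $\ge 2/3$ by reading off the times at which the estimated count jumps through successive integer values.

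The DP side is where the online structure must enter. For $s,s'\in\{0,1\}^n$ that first differ at bit $i$, the inputs $D_s$ and $D_{s'}$ agree on the time-prefix up to the end of the $(i-1)$-st sub-interval, so by the online property the distributions of the relevant output prefixes coincide and only a subsequent perturbation remains. I would iterate a bit-by-bit hybrid argument down the tree, where each level contributes only a constant factor $e^{O(\eps)}$ and a single additive $\delta$ term rather than the global $e^{\eps n}$. Combining with $\delta\le 1/(100n)$ one aims to conclude $\Pr[\Phi(\calA(D_s))=s]\le 2^{-\Omega(n)}+O(n\delta)<2/3$, contradicting the decoding guarantee.

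The main obstacle is making this hybrid truly save the $e^{\eps n}$ factor: after flipping $s_i$, not only does the $i$-th event move but the sub-intervals for events $i+1,\ldots,n$ shift as well, so $D_s$ and $D_{s'}$ remain at Hamming distance $\Theta(n-i)$ globally. The resolution should be an inductive argument where at step $i$ one only invokes DP for a local ``infix'' distinguishability, with the later cascade absorbed into subsequent recursion via the online identity. A secondary challenge is lifting from a ``fail to decode $s$'' statement to a true $\Omega(n)$ pointwise-error lower bound at a single time step; this likely requires designing the recursive partition so that any two $s,s'$ that differ in many bits also produce running-count sequences differing by $\Omega(n)$ at some common witness time, so that a ``decoding'' error really forces a large $\ell_\infty$ gap rather than a merely constant one.
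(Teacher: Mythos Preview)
Your proposal has the right instinct---exploit the binary tree of sub-intervals and the online prefix-identity---but it does not actually close the gap you yourself flag. In your packing construction, two strings $s,s'$ that first differ at bit $i$ induce inputs $D_s,D_{s'}$ that disagree in $\Theta(n-i)$ event positions, and the online identity only buys you equality of the output distribution on a common time-prefix; once the trajectories diverge, the remaining $\Theta(n-i)$ differences are still there and your ``bit-by-bit hybrid'' must pay $e^{\eps}$ for each of them. Saying the cascade is ``absorbed into subsequent recursion'' is not an argument: at level $i$ you are comparing two \emph{specific} neighboring-at-distance-$\Theta(n-i)$ inputs, and no inductive hypothesis about other $D_s$'s helps bound that comparison. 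The secondary issue you raise---that a decoding failure need not produce an $\Omega(n)$ error at a single time---is also real and not addressed.

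The paper sidesteps both problems by abandoning the fixed packing entirely. It first reduces to the simpler \emph{Threshold Monitor Problem} (output $\bot$ until the running count enters $[k/2,k]$, then $\top$), so the $\ell_\infty$-error-vs-decoding issue evaporates. Then, crucially, it builds a \emph{single} hard instance $D^{(k)}$ adaptively: starting from $D^{(0)}=0^T$, at round $i$ it looks at where the algorithm's halting mass on the current interval $[\ell,r]$ lies, and inserts the next ``1'' so that the recursion continues into whichever half carries \emph{at most half} of that mass. The analysis tracks the potential $\Phi^{(i)}=\sum_{j\in[\ell,r]} p_j^{D^{(i)}}$ and shows it obeys $\Phi^{(i)}\le \tfrac{1}{2}e^{\eps}\Phi^{(i-1)}+\delta$; since $\tfrac{1}{2}e^{\eps}<1$ for $\eps\le 0.5$, the potential stays bounded by $O(\Phi^{(0)}+k\delta)$ rather than blowing up by $e^{\eps k}$. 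This adaptive halving is the missing idea: you cannot get it from a static family $\{D_s\}$, because the ``correct'' half to recurse into depends on the algorithm $\calA$, not on $s$.
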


Observe that this gives a lower bound of $\Omega(\log T)$ even when $n$ is as small as $\log T$. As we explained, such a lower bound does not hold for the offline model, and hence our result differentiates between these two models.

\paragraph{The threshold monitor problem.} To obtain our results, we consider a simpler variant of the counter problem, which we call the {\em Threshold Monitor Problem}, which is natural and of independent interest. Similarly to the counter problem, on every time step $t$ we obtain an input $\Delta_t$ indicating the number of events occurring on step $t$. However, instead of maintaining a counter for the total number of events, all we need to do is to indicate when it (roughly) crosses some predefined  threshold $k$. Formally,

\begin{problem} [The threshold monitor problem] \label{threshmon:problem}
There is a pre-determined threshold value $k$.
The input to the algorithm is a sequence of $T$ updates $(\Delta_1,\dots,\Delta_T)\in \{0,1\}^T$. At each time step $t\in [T]$, the algorithm receives the update $\Delta_t$, lets $n_t \gets \sum_{j=1}^{t} \Delta_j$, and outputs $\top$ or $\bot$, subject to the following requirements:
\begin{itemize}[itemsep=0px,topsep=4px]
    \item If $n_t < \frac{1}{2}k$, the algorithm reports $\bot$.
    \item If $n_t \ge k$, the algorithm reports $\top$.
    \item If $n_t \in [\frac{1}{2}k, k)$, the algorithm may report either $\bot$ or $\top$.
    \item Once the algorithm reports a $\top$, it halts.
\end{itemize}
We say the algorithm succeeds if all its responses (before halting) satisfy the requirements above.
\end{problem}

Clearly, an algorithm for the counter problem with $\ell_\infty$ error at most $k/2$ can be used to solve the Threshold Monitor Problem. Therefore, a lower bound for the Threshold Monitor Problem implies a lower bound for the counter problem. We establish the following, which implies Theorem~\ref{counterLB:thm}. The proof is included in Section~\ref{monitorLBproof:sec}.
\begin{theorem} \label{monitorLB:thm}
    Consider the Threshold Monitor Problem with $T\ge 1$, $k \leq \frac{1}{2}\log(T)$, and privacy parameter $\delta \leq \frac{1}{100k}$. There is no $(0.1,\delta)$-DP algorithm that succeeds with probability $0.99$.
\end{theorem}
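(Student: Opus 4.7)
The plan is to build an adaptive adversary that interleaves halving of a shrinking target interval $H_i$ with single-event insertions, exploiting the fact that at $\eps=0.1$ the multiplicative privacy loss $e^\eps < 1.11$ of adding one event is dwarfed by the factor $1/2$ gained from halving. This produces a geometric decay of the algorithm's in-target firing probability that persists even for arbitrarily large $T$, and so sidesteps the $e^{k\eps}$ blowup that ruins any direct group-privacy argument.

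Setting up the invariant: let $\tau$ be the algorithm's first-$\top$ time, put $H_0=[1,T]$, $I^0=0^T$, and $a_i := \Pr[\tau \in H_i \mid I^i]$. Correctness on $0^T$ gives $a_0 \leq 0.01$. At round $i$ the adversary splits $H_{i-1}$ into two equal halves, whose firing probabilities under $I^{i-1}$ sum to $a_{i-1}$, picks the smaller half as $H_i$, and inserts the $i$-th event to obtain the neighboring input $I^i$. Single-step $(\eps,\delta)$-DP then gives $a_i \leq (e^\eps/2)\,a_{i-1} + \delta$, and iterating (using $e^\eps/2 < 0.56$) we obtain
\[
 a_k \;\leq\; (e^\eps/2)^k \cdot 0.01 \;+\; \frac{\delta}{1-e^\eps/2} \;\leq\; 0.01 + 2.24\,\delta,
\]
which is below $0.05$ for every $k\geq 1$ whenever $\delta \leq 1/(100k)$.

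To close the argument, observe that on $I^k$ the algorithm must fire inside the mandatory window $W(I^k) = [p_{(\lceil k/2 \rceil)}, p_{(k)}]$ with probability at least $0.99$, so if the events are arranged such that $W(I^k) \subseteq H_k$ then $0.99 \leq \Pr[\tau \in W \mid I^k] \leq a_k < 0.05$, contradicting correctness and hence yielding Theorem~\ref{monitorLB:thm}.

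The hard part is making the event placement compatible with the halving: $H_k$ is only fully revealed after all $k$ halving decisions, yet each event must be committed on-line during the round in which the corresponding DP step is charged. Since $|H_k| = T/2^k \geq \sqrt{T} \gg k$, there is ample room for $k$ distinct positions inside $H_k$, and the plan is to specify a placement rule that pins each $p_i$ to an address within $H_i$ whose low-order bits guarantee it survives every subsequent halving choice, so that both endpoints of $W(I^k)$ fall inside $H_k$. Reconciling this placement constraint with the freedom required at each round (to pick the smaller-firing half) is the crux of the construction.
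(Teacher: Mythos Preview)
Your core idea---halving the target interval and using that $e^\eps/2<1$ to obtain geometric decay---is exactly right, and your recursion $a_i\le(e^\eps/2)\,a_{i-1}+\delta$ coincides with the paper's bound on its potential $\Phi^{(i)}$. The gap is precisely where you flag it: the event placement.

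The plan to ``pin each $p_i$ to an address within $H_i$ whose low-order bits guarantee it survives every subsequent halving choice'' cannot work. The halving choice at round $i{+}1$ (which half of $H_i$ has smaller firing probability under $I^i$) depends on $I^i$, which already contains $p_i$. No position in $H_i$ lies in both halves, and you have no control over which half the algorithm's distribution will make lighter, so no placement of $p_i$ can be guaranteed to land in $H_{i+1}$, let alone in $H_k$. The circularity you identify at the end is genuine and is not resolvable by any addressing scheme: the adversary must commit to $p_i$ before learning $H_{i+1},\dots,H_k$, yet those sets depend on $p_i$.

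The paper's resolution is to abandon the requirement that events lie \emph{inside} the shrinking interval. Instead it inserts each new event at the \emph{left boundary} of the current interval (position $\ell$ when recursing left, position $m$ when recursing right), so that the new event sits immediately to the left of the next interval. After $k$ rounds all $k$ ones lie strictly to the left of the final position $\ell$, and the contradiction comes from bounding $c_{\mathrm{total}}=\Pr[\tau\le\ell\mid D^{(k)}]$ rather than $\Pr[\tau\in H_k]$. This requires tracking a \emph{second} quantity in addition to your $a_i$; the paper controls both simultaneously via the combined potential $e^{-\eps}c_{\mathrm{total}}^{(i)}+\tfrac{1}{1-e^\eps/2}\Phi^{(i)}$. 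The reason this closes is the online nature of the problem: once a position is ``frozen'' to the left of the current interval, all future insertions occur to its right and therefore do not alter the halting probability at that position. Your single-potential argument never invokes this online property, which is why it cannot be completed as written.
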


A natural question is whether 
the requirement of $\delta < \frac{1}{\log(T)}$ can be relaxed to $\delta < o(1)$. The following observation suggests that the answer is negative. 
\begin{observation}
    There is an $(\eps,\delta)$-DP algorithm for the Threshold Monitor Problem with threshold $k=O\left(\min\left\{\frac{\log T}{\eps} , \frac{1}{\delta}\right\}\right)$ that succeeds with constant probability.
\end{observation}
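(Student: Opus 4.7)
I would establish the claim by exhibiting two algorithms, one attaining each branch of the minimum.

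For $k = O(\log(T)/\eps)$, the idea is to apply the standard Sparse Vector Technique: sample a noisy threshold $\hat\theta = \tfrac{3}{4}k + \Lap(2/\eps)$ once, and at each time $t$ draw a fresh $\nu_t \sim \Lap(4/\eps)$ and output $\top$ (and halt) the first time $n_t + \nu_t \ge \hat\theta$, else $\bot$. This is $(\eps,0)$-DP by the classical analysis (only one ``positive'' is ever reported), and a union bound gives that all noise magnitudes are $O(\log(T)/\eps)$ with constant probability, so taking $k = \Theta(\log(T)/\eps)$ forces the halt to occur while $n_t \in [k/2, k]$.

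For $k = O(1/\delta)$, the idea is to use a \emph{single} shared noise sample instead of per-step noise. Draw $Z \sim \Lap(b)$ with $b = \Theta(\max\{1/\eps,\, 1/\delta\})$, and output $\top$ the first time $n_t \ge \tfrac{3}{4}k + Z$. Correctness is a single Laplace tail bound: the algorithm succeeds whenever $|Z| \le k/4$, which happens with probability $1 - e^{-k/(4b)}$, a constant once $k = \Omega(b)$. For privacy, the entire output sequence is determined by the halt time $\tau$; under coupling on $Z$, if $D'$ is obtained from $D$ by removing the event at $t^*$ (so $n_t = n'_t + 1$ for $t \ge t^*$), one checks that $\tau^{D'}(Z) = \tau^D(Z)$ for $Z \le n_{t^*-1} - \tfrac{3}{4}k$ and $\tau^{D'}(Z) = \tau^D(Z+1)$ for larger $Z$. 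A change of variables then expresses $\Pr[\tau^{D'} \in S]$ as an integral against the shifted density $f_Z(\cdot - 1)$, and the Laplace density ratio $f_Z(z)/f_Z(z-1) \le e^{1/b} \le e^\eps$ supplies the $e^\eps$ factor on essentially all of the mass.

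The main obstacle is the single atom $\tau_0 = t^*$, which is not covered by the density-ratio argument: in $D$ the counter jumps over the threshold exactly at $t^*$ whenever $Z$ lies in the unit interval $(n_{t^*-1} - \tfrac{3}{4}k,\, n_{t^*-1} + 1 - \tfrac{3}{4}k]$, whereas in $D'$ the counter is flat at $t^*$ and so $\Pr[\tau^{D'} = t^*] = 0$. The extra mass in $D$ is at most the maximum Laplace density $1/(2b)$ times the interval length $1$, so taking $b \ge 1/(2\delta)$ bounds it by $\delta$. This precludes pure DP with a single shared noise sample, but fits exactly into the $\delta$-slack of approximate DP, yielding $(\eps,\delta)$-DP. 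Combining the two algorithms gives $k = O(\min\{\log(T)/\eps,\, 1/\delta\})$.
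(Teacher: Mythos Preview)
Your $O(\log(T)/\eps)$ branch via Sparse Vector is exactly what the paper does. For the $O(1/\delta)$ branch your argument is correct, but the paper takes a different and more elementary route: simply release each input bit $\Delta_t$ independently with probability $\delta$ (and output a null symbol otherwise), then halt at the first released~$1$. This is trivially $(0,\delta)$-DP, since neighboring streams differ in one coordinate and that coordinate is revealed with probability at most~$\delta$. The index (among the $1$'s) of the first released $1$ is $\mathrm{Geom}(\delta)$, so for $k=\Theta(1/\delta)$ it lands in $[k/2,k]$ with constant probability.

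Compared to this, your shared-Laplace construction is heavier: it needs the coupling/change-of-variables argument and the separate handling of the atom at $t^*$, and as stated with $b=\Theta(\max\{1/\eps,1/\delta\})$ it yields $k=O(\max\{1/\eps,1/\delta\})$, which matches $O(1/\delta)$ only in the usual regime $\delta\le\eps$. The paper's randomized-response trick sidesteps both the atom issue and the $\eps$-dependence, since it is $(0,\delta)$-DP outright. On the other hand, your mechanism is a natural ``one fixed noisy threshold'' variant of AboveThreshold and makes transparent exactly where the additive $\delta$ is spent; either approach suffices for the observation.
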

This follows as a special case of the classical \texttt{AboveThreshold} algorithm of~\cite{DworkNRRV09}, also known as the \emph{Sparse Vector Technique}. The algorithm performs an online sequence of $T$ noisy threshold tests and halts after the first time that the response is that the threshold is exceeded. The overall privacy cost is that of a single query. For the threshold monitor problem, we test sequentially each prefix sum if it is above the threshold $3k/4$. We return $\bot$ on ``below'' responses and return $\top$ (and halt) on the first ``above'' response. We require that the $\ell_\infty$ norm of the noise does not exceed $k/4$ (w.h.p.), which holds when $k= \Omega\left(\frac{\log T}{\eps}\right)$. The $O\left(\frac{1}{\delta}\right)$ upper bound follows by simply releasing every input bit with probability $\delta$, and estimating the sum of the bits from that.

\subsubsection{Application I: The threshold queries problem}

We show that our lower bound extends to the {\em threshold queries problem}, thereby solving an open question of~\cite{BunSU:SODA2017}. In this problem, we get an input dataset $D$ containing $n$ points from $[0,1]$. Then, for $T$ rounds, we obtain a query $q\in[0,1]$ and need to respond with an approximation for the number of points in $D$ that are smaller than $q$. The privacy requirement is that the sequence of all our answers together satisfies DP w.r.t.\ $D$. Note that here the dataset is fixed at the beginning of the execution and it is the {\em queries} that arrive sequentially. 

In the offline variant of this problem, where all the queries are given to us in the beginning of the execution (together with the dataset $D$), this problem is equivalent to the offline counter problem, and can be solved with error $\tilde{O}(\log^* T)$ using the algorithm of~\cite{CLNSS:STOC2023,KaplanLMNS20}. For the online variant of the problem (where the queries arrive sequentially), \cite{BunSU:SODA2017} presented an upper bound of $O(\log T)$, and asked if it can be improved. We answer this question in the negative and show the following theorem.

\begin{theorem}\label{thm:threshld-query}
Let $n,T\ge 1$ be two integers. There is no $(0.1,\frac{1}{n^{2}})$-DP algorithm $\calA$ that, on input a set $S$ of $n$ reals number from $[0,1]$, answers a sequence of $T$ online but non-adaptive threshold queries within error $\frac{1}{2}\min\{n,\log(T)\}$ and with probability $0.99$. 
\end{theorem}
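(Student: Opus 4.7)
The plan is to reduce from the Threshold Monitor Problem (Theorem~\ref{monitorLB:thm}). Suppose for contradiction that a $(0.1, 1/n^{2})$-DP algorithm $\calA$ solves online threshold queries on datasets of size $n$ with error at most $\frac{1}{2}\min\{n, \log T\}$ and success probability at least $0.99$. We will use $\calA$ to build a threshold monitor algorithm $\calB$ with threshold $k = \Theta(\min\{n,\log T\})$, contradicting Theorem~\ref{monitorLB:thm}. The parameter constraints $k \leq \frac{1}{2}\log T$ and $\delta \leq \frac{1}{100k}$ are satisfied in the relevant regime, since $\delta = 1/n^{2}$ and $k \leq n$ with $n$ at least a modest constant.

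The encoding is natural: given online monitor updates $\Delta_1,\ldots,\Delta_T \in \{0,1\}$ with $\sum_t \Delta_t \leq n$, form the dataset $S = \{t/T : \Delta_t = 1\} \subset [0,1]$ and feed $\calA$ the non-adaptive queries $q_t = t/T$. Since $|S \cap [0, q_t]| = \sum_{j \leq t} \Delta_j = n_t$, the outputs $y_t$ approximate the monitor's running count with error at most $k/2$ at every step, so $\calB$ reports $\top$ at the first time $y_t \geq 3k/4$ and $\bot$ otherwise. Privacy transfers cleanly: changing a single $\Delta_{t^*}$ just inserts or removes the point $t^*/T$ in $S$, so the $(0.1, 1/n^{2})$-DP guarantee of $\calA$ with respect to $S$ gives the same guarantee for $\calB$ with respect to the monitor input.

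The main obstacle is to make $\calB$ a genuine online algorithm: $\calA$ nominally receives $S$ at time zero, whereas the updates defining $S$ arrive online in the monitor. The key observation to exploit is that $\calA$'s queries are online, so $\calA$ must commit to $y_t$ after seeing only $q_1,\ldots,q_t$; and for the schedule $q_t = t/T$ the target $|S \cap [0, t/T]|$ depends only on the prefix of $S$ that $\calB$ has learned by time $t$. Concretely, $\calB$ will run a lazy simulation of $\calA$ that commits to $S$ incrementally in lockstep with the queries, revealing each grid point to the simulated $\calA$ exactly when its associated monitor update arrives. The most delicate step will be to argue formally that this lazy instantiation yields the same joint output distribution as running $\calA$ on the full $S$ upfront while preserving its privacy guarantee; once that is in hand, a failure of $\calA$ translates into a failure of $\calB$ and yields the contradiction with Theorem~\ref{monitorLB:thm}.
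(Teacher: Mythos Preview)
Your reduction target and encoding are exactly right, and you correctly identify the central obstacle: $\calA$ receives the full dataset $S$ upfront, while $\calB$ only learns the bits $\Delta_t$ online. But your proposed resolution --- a ``lazy simulation'' that reveals grid points to $\calA$ as they arrive --- does not go through as stated, and this is precisely the step where the paper requires a substantial new idea.

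The issue is that ``queries arrive online'' is not the same as ``data arrives online.'' In the threshold query model, $\calA$ has all of $S$ from the very first step; only the queries $q_1,\dots,q_T$ are revealed sequentially. So when $\calA$ answers $q_t$, it may legitimately use the entire dataset $S$, including the points larger than $q_t$. The fact that the \emph{target} value $|S\cap[0,q_t]|$ depends only on the prefix $S\cap[0,q_t]$ in no way forces $\calA$'s \emph{answer} to depend only on that prefix: $\calA$ could calibrate noise globally, compute a sanitized CDF of the whole set, or otherwise let future points influence early responses. Consequently, there is no reason your lazy instantiation should reproduce (or even approximate) the joint output distribution of $\calA$ run on the full $S$; the sentence ``argue formally that this lazy instantiation yields the same joint output distribution'' is exactly where the argument fails.

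The paper closes this gap with a Ramsey-theoretic argument (Lemma~\ref{lemma:apply-ramsey}): for any fixed $\calA$ and any $\eta>0$, one can extract $T$ specific points $x_1<\dots<x_T$ such that, restricted to datasets $S\subseteq\{x_1,\dots,x_T\}$, the halting probability of $\calA$ at query $x_i$ depends (up to $\eta$) only on $S\cap[0,x_i]$. This is obtained by repeatedly applying the hypergraph Ramsey theorem to force monochromatic behavior over all possible extensions of $S$ beyond $x_i$. Only after this ``prefix-oblivious'' structure has been forced on $\calA$ does the online simulation you describe become valid; Theorem~\ref{theo:ramsey-reduction} then carries out the reduction, with the $\eta$-slack absorbed into the privacy and accuracy parameters. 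Without this step (or some substitute for it), the reduction is incomplete.
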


\subsubsection{Application II: Private online learning}

We apply our technique to establish a lower bound for the \emph{private online learning} problem in the \emph{mistake bound model}. To the best of our knowledge this is the first non-trivial lower bound for the private online learning model, separating it from the non-private online learning model. This partially answers a COLT'22 open question posted by \cite{sanyal2022you}.\footnote{This does not completely resolves their question as they aimed for a stronger separation.} 
Our lower bound also implies, for the first time, a separation between the problems of {\em private online learning} and \emph{private online prediction}. Let us recall these two related models.

\paragraph{The private online learning model.} 
Suppose $\mathcal{X}$ is the data domain and Let $\mathcal{H}\subseteq \{0,1\}^{\mathcal{X}}$ be a hypothesis class. The private online learning model, considered first by 
\cite{GolowichL21-littlestone},
can be expressed as a $T$-round game between an {\em adversary} $\BBB$ and a learning algorithm (the learner) $\AAA$. In every round $i$ of this game:
\begin{enumerate}[itemsep=0px]
    \item The adversary chooses a query $x_i\in\mathcal{X}$ and the learner  chooses a hypothesis $h_i:\XXX\rightarrow\{0,1\}$.
    \item At the same time, the adversary gets $h_i$ and the learner gets $x_i$.
    \item The adversary chooses a ``true'' label $y_i$ under the restriction that $\exists h\in\mathcal{H}$ such that $\forall j\leq i$ we have $h(x_j)=y_j$. This ``true'' label $y_i$ is given to the learner.
\end{enumerate}

We say that a mistake happened in round $i$ if $h_i(x_i) \ne y_i$. The goal of the learner is to minimize the total number of mistakes. The privacy definition is that the sequence of all hypotheses $(h_1,h_2,\dots,h_T)$ together satisfies differential privacy w.r.t.\ the inputs given to the algorithm throughout the execution. That is, if we were to replace one labeled example $(x_i,y_i)$ with another $(x'_i,y'_i)$ then that should not change by much the distribution on the resulting vector of hypotheses.

\begin{remark}
In Step~1 of the game above, the adversary and the learner chooses the next query and hypothesis {\em simultaneously}. There is another formulation of the model in which the learner is allowed to choose the current hypothesis $h_i$ as a function of the current query $x_i$. Note that, in both models, $h_i(x_i)$ clearly must depend on $x_i$. The question is if the description of the hypothesis itself is allowed to depend on $x_i$ (in a way that preserves DP). We do not know if our lower bound holds also for this alternative formulation of the model.
\end{remark}

\paragraph{The private online prediction model.}
This is a similar model, introduced by \cite{KaplanMMNS23-prediction}, which incorporates the following two changes: 
First, in Step~2 of the above game, instead of getting the hypothesis $h_i$, the adversary only gets $h_i(x_i)$. Second, the privacy requirement is weaker. Specifically, instead of requiring that the vector of {\em all} outputs given by the learner should be insensitive to changing one labeled example $(x_i,y_i)$, the requirement is only w.r.t.\ the vector of all outputs {\em other than the output in round $i$}. 
Intuitively, in the context of private online prediction, this privacy definition allows the $i$th prediction to depend strongly on the $i$th input, but only very weakly on all other inputs. This privacy notion could be a good fit, for example, when using a (privacy preserving) online learner to predict a medical condition given an individual's medical history: As long as the $i$th individual does not share its medical data with additional users, only with the learner, then its medical data will not leak to the other users via the predictions they receive.  \cite{KaplanMMNS23-prediction} showed strong positive results that outperform the known constructions for the standard model of private online learning. However, before our work, it was conceivable that the two models are equivalent, and no separation was known.

\medskip

We consider the class of \emph{point functions} over a domain $\XXX$, containing all functions that evaluate to $1$ on exactly one point of the domain $\XXX$.\footnote{That is, for every $x\in\XXX$, the class of point functions contains the function $c_x$ defined as $c_x(y)=1$ iff $y=x$.}
This class is known to have Littlestone dimension 1, and hence there is a non-private online-learner for it that makes at most $1$ mistake throughout the execution (no matter what $T$ is). We apply our techniques to establish a non-trivial lower bound for the private online learning model, thereby separating it from the non-private model. This is captured by the following theorem.

\begin{theorem}\label{thm:onlineLearning}
    Let $T\ge 100$ denote the number of time steps, and let $\mathcal{X}=\{0,1,2,\dots,T\}$ be the data domain. Let $\mathcal{H}$ be the class of {\em point functions} over $\mathcal{X}$. Let $\delta = \frac{1}{20\log(T)}$. Then, for any $(\eps,\delta)$-DP online learner $\AAA$ for $\mathcal{X}$ that answers $T$ queries there exists a query sequence on which $\AAA$ makes $\Omega(\log(T))$ mistakes with constant probability. %\edith{expected number of mistakes? or must have a "mistake bound" of...}.
\end{theorem}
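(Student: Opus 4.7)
My plan is to prove Theorem~\ref{thm:onlineLearning} by reducing the Threshold Monitor Problem (Theorem~\ref{monitorLB:thm}) to private online learning of point functions. Suppose for contradiction that there is an $(\eps,\delta)$-DP online learner $\AAA$ for point functions on $\mathcal{X}=\{0,1,\ldots,T\}$ that, on every query sequence, makes at most $M = o(\log T)$ mistakes with probability at least $0.99$. Using $\AAA$ as a black box, I would build a $(0.1, O(\delta))$-DP algorithm $\mathcal{B}$ for the Threshold Monitor Problem with threshold $k = \Theta(M)$. Since $\delta = 1/(20\log T) \le 1/(100k)$ whenever $M = o(\log T)$, Theorem~\ref{monitorLB:thm} rules out $k < \tfrac{1}{2}\log T$, forcing $M = \Omega(\log T)$ with the desired constant probability.

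The algorithm $\mathcal{B}$ acts as an adversary in the online learning game against $\AAA$. Because $|\mathcal{X}| = T+1 > T$, I can reserve a point $x^\star \in \mathcal{X}$ that $\mathcal{B}$ never queries; this makes the game realizable under label $y_t = 0$ at every round, the realizing concept being $c_{x^\star}$. The queries depend on the threshold-monitor input $\Delta$: on a ``1-round'' ($\Delta_t = 1$), $\mathcal{B}$ issues a previously-unused \emph{challenge} query aimed at eliciting $h_t(x_t)=1$; on a ``0-round'' ($\Delta_t = 0$) it reuses a fixed ``safe'' query chosen to not accumulate spurious mistakes. $\mathcal{B}$ maintains a counter of rounds where $h_t(x_t)\ne y_t=0$ and outputs $\top$ as soon as this counter crosses a threshold slightly below $M$, and $\bot$ otherwise. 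Privacy of $\mathcal{B}$ follows from that of $\AAA$ by post-processing: neighboring $\Delta,\Delta'$ alter exactly one $(x_t,y_t)$ pair in $\AAA$'s input stream, so the $(\eps,\delta)$-DP guarantee of $\AAA$ transfers to $\mathcal{B}$.

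The main obstacle is proving correctness, namely that this mistake counter reliably tracks $n_t = \sum_{j\le t}\Delta_j$ and crosses its threshold close to when $n_t$ crosses $k$. The key step is to show that, because $\AAA$ has so far only seen zero-labels and a previously-unused challenge point cannot have been distinguished from the ``hidden-target'' scenario in which $x^\star$ equals that point, $\AAA$ must assign $h_t = c_{x_t}$ with constant probability on each fresh 1-round. I would formalize this via a coupling/hybrid argument: in the hidden-target scenario where $x^\star = x_t$, $\AAA$ must output $c_{x_t}$ at round $t$ with high probability to avoid a mistake there; applying the $(\eps,\delta)$-DP guarantee transfers this to $\Omega(1)$ probability in the actual null execution. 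Concentration across the $n_t$ independent challenges then gives $\Omega(n_t)$ mistakes with high probability, while on 0-rounds the single safe reused query can be shown to contribute negligibly to the mistake count by a direct symmetric argument.

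The technical crux will be making the coupling across rounds tight enough that the constants in the ``$\Omega(1)$ per-round mistake probability'' survive the sum of $(\eps,\delta)$-losses needed to simultaneously argue across all 1-rounds; this parallels the fingerprinting-style structure already used in the proof of Theorem~\ref{monitorLB:thm}, which suggests that the same proof template (with one ``would-have-been target'' per 1-round in place of per-step threshold-crossings) should go through with only constant-factor degradation in the threshold.
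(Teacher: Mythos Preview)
The proposal has a genuine gap. Your key step --- that on each fresh 1-round the learner must output $h_t(x_t)=1$ with constant probability --- is unjustified and in fact false for improper learners. In your construction every label is $0$. An improper learner may simply output the all-zero hypothesis at every round; this learner makes \emph{zero} mistakes on your sequence, regardless of how many 1-rounds there are, so your mistake counter never increments and $\mathcal{B}$ never outputs $\top$. Your ``hidden-target'' coupling does not rescue this: the hypothesis $h_t$ is fixed before the round-$t$ label is revealed, so swapping $y_t=0$ for $y_t=1$ does not change $h_t$ at all (no DP argument is even invoked), and nothing in the all-zero history forces the learner to predict $1$ on a fresh point. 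The mistake bound $M$ constrains the \emph{total} number of mistakes across the run; it does not force the learner to be correct at any specific round, so ``$\AAA$ must output $c_{x_t}$ at round $t$ to avoid a mistake there'' is simply not a valid deduction.

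The paper confronts exactly this obstacle (it notes explicitly that the proper-learner reduction, which is essentially your plan, breaks for improper learners). Its fix is a two-case argument. It first runs $\AAA$ on a \emph{random} all-zero sequence split into $\sqrt{T}$ phases over disjoint point sets. Either (Case~1) for every phase and every point $k'$ in that phase the learner assigns $h_i(k')=1$ with non-negligible cumulative probability, in which case the expected number of mistakes on the random sequence itself is already $\Omega(\sqrt{T})$; or (Case~2) there is a phase and a point $k'$ on which $h_i(k')=0$ throughout the phase with probability $0.99$ for some fixed derandomized sequence that avoids $k'$. Only in Case~2 does the paper invoke the threshold-monitor machinery: it applies the hard-instance construction (Algorithm~\ref{algo:construction}) to insert $\Omega(\log T)$ copies of $(k',1)$ into that phase, each of which is now a guaranteed mistake. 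Your reduction is missing precisely this Case~1/Case~2 dichotomy, which is what bridges the gap from proper to improper learners.
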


We complement this result with an upper bound for the online private prediction model that achieves a mistake bound which is independent of $T$. This establishes a separation between private online learning and online prediction. Formally,

\begin{lemma}\label{lem:onlinePredictionPoints}
    Let $\mathcal{X}$ be a domain. The class of point functions  over $\mathcal{X}$ admits an $(\eps, \delta)$-DP online predictor with mistake bound $O(\frac{1}{\eps^2}\log^2\frac{1}{\delta})$.
\end{lemma}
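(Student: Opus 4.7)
\noindent\textbf{Proof proposal for Lemma~\ref{lem:onlinePredictionPoints}.}

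The plan is to instantiate the classical \emph{AboveThreshold} (Sparse Vector Technique) mechanism on per-point positive-label counters, exploiting the fact that in a realizable stream over the class of point functions there is a unique target $c^*\in\XXX$ with $y_i=\mathbbm{1}[x_i=c^*]$ for every round. Consequently, if the learner maintains counters $n_x=|\{j<i:x_j=x,\ y_j=1\}|$, then $n_x\equiv 0$ for every $x\neq c^*$, so only $n_{c^*}$ can grow over time.

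Algorithmically, the learner fixes a threshold $\tau=\Theta(\tfrac{1}{\eps}\log\tfrac{1}{\delta})$, samples a one-time noisy threshold $\hat\tau=\tau+\Lap(2/\eps)$, and maintains a locked target $\hat c\in\XXX\cup\{\bot\}$. On round $i$ it receives $x_i$ and predicts $\mathbbm{1}[x_i=\hat c]$ if $\hat c\neq\bot$; otherwise it draws a fresh $\nu_i\sim\Lap(4/\eps)$ and tests $n_{x_i}+\nu_i\ge\hat\tau$, setting $\hat c\gets x_i$ and predicting $1$ on success, and predicting $0$ otherwise. After outputting it observes $y_i$ and updates $n_{x_i}$.

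Privacy decomposes into two parts. Up to and including the halting round the algorithm is a verbatim instance of AboveThreshold on queries of sensitivity $1$, and is therefore $\eps$-DP. The outputs after halting are a deterministic function of $\hat c$ and the queries, so they incur no additional privacy cost provided $\hat c$ is \emph{stable} under any single-example modification. Leveraging the prediction-model relaxation (only outputs at rounds $j\neq i$ need to be insensitive to a change at round $i$) together with the fact that $\tau\gg 1$, a Laplace-tail union bound shows that with probability $\ge 1-\delta$ a single-example perturbation cannot move the first counter to cross $\hat\tau$ to a different point, so $\hat c$ is stable almost surely, and the full output sequence inherits $(\eps,\delta)$-DP.

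For the mistake bound, the only round types that can incur a mistake are (i) pre-halting rounds with $x_i=c^*$ (we predict $0$) and (ii) the $\le\delta$-probability failure event $\hat c\neq c^*$. By the accuracy guarantee of AboveThreshold, halting occurs by the time $n_{c^*}$ reaches $O(\tau+\tfrac{1}{\eps}\log\tfrac{1}{\delta})$ with probability $\ge 1-\delta$, which fits inside the claimed $O(\tfrac{1}{\eps^2}\log^2\tfrac{1}{\delta})$ bound. The main obstacle will be to make the post-halting stability argument rigorous under arbitrary (possibly non-realizable) neighboring streams, where a neighbor could add a spurious positive at a point $x\neq c^*$; here the fresh per-round Laplace noise on each $\nu_i$ is what rules out, with probability $\ge 1-\delta$, such a single perturbation diverting the first crossing to the wrong counter.
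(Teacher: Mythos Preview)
Your approach differs substantially from the paper's. The paper builds a dedicated joint-DP ``mirror'' primitive that, on a zero input, \emph{deterministically} outputs $\bot$ (so false positives cannot accumulate with $T$), using a staircase of Bernoulli probabilities rather than Laplace-perturbed threshold tests; the target point $x^*$ is identified by a separate DP sparse-histogram call rather than by reading off $x_{t^*}$ at the halting round, and privacy is argued via a simulator making three DP queries.

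The central gap in your argument is the post-halting stability claim for $\hat c$. Take the stream $S$ consisting of $m$ rounds of $(c,1)$ followed by $T-m$ rounds of $(a,0)$ with $a\neq c$, where $m$ is chosen so that $\Pr_S[t^*=m]=\Omega(\eps)$ (such $m$ exists because the halting-time distribution has spread $\Theta(1/\eps)$). On the event $t^*=m$ you lock $\hat c=c$, and every prediction at a round $j\neq m$ is $0$ (since $x_j=a\neq c$ for $j>m$). Now form the neighbor $S'$ by replacing round $i=m$ with $(a,0)$. The first $m-1$ rounds are unchanged, but from round $m$ onward the queried counter is $n_a=0$, so the only way for all predictions at rounds $j\neq m$ to be $0$ on $S'$ is $t^*>T$; this has probability at most $\bigl(1-\tfrac12 e^{-\eps\hat\tau/4}\bigr)^{T-m+1}\to 0$ as $T\to\infty$. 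Hence $\Pr_S[\text{all predictions }0\text{ at }j\neq m]=\Omega(\eps)$ while $\Pr_{S'}[\cdot]\to 0$, violating $(\eps,\delta)$-DP for any $\delta=o(\eps)$. In words: with non-negligible probability the halting round is exactly the modified round, and then $\hat c=x_i$ leaks directly into every subsequent prediction. A Laplace-tail union bound over a \emph{single} perturbation cannot rescue this, because the bad event already has mass $\Omega(\eps)$ on the unmodified stream.

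A related issue also breaks utility: you test $n_{x_i}+\nu_i\ge\hat\tau$ with fresh noise at \emph{every} pre-halting round, including rounds where $n_{x_i}=0$; the per-round false-halt probability is $e^{-\Theta(\eps\tau)}=\delta^{\Theta(1)}$, so over $T$ rounds the chance of locking the wrong point is $\Theta(T\cdot\delta^{\Theta(1)})$, unbounded in $T$. The paper sidesteps both problems by never running a noisy test on a zero input and by never setting the target equal to the input at the halting round.
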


The proofs of Theorem~\ref{thm:onlineLearning} is given in Appendix~\ref{sec:lowerboundOnlineLearning} and the proof of Lemma~\ref{lem:onlinePredictionPoints} is given in Appendices~\ref{sec:JDPmirror} and~\ref{JDPpoint:sec}.
In Appendix~\ref{sec:JDPmirror} we present an algorithm that satisfies \emph{Joint Differential Privacy}~\citep{KearnsPRRU15} for a version of the threshold monitor problem that applies with 
$k=O(\frac{1}{\eps^2}\log^2\frac{1}{\delta})$ 
and in Appendix~\ref{JDPpoint:sec} we apply it to construct an online prediction algorithm.

%%%%%%%%%%%%%%%%%%%%%%%%%%%%%%%%%%%%%%%%%%%%%%%%%%%%%%%%%%%%%%%%%%%%%%%%%%%%%%%%%%%%%%%%%%%%%%%%%%%%%%%%%%%%%%%%%%%%%%%%%%%%%%%%%%%%%%%%%%%%%%%%%%%%%%%%%%%%%%%%%%%%%%%%%%%%%%%%%%%%%%%%%%%%%%%%%%%%%%%%%%%%%%%%%%%%%%%%%%%%%%%%%%%%%%%%%%%%%%%%%%%%%%%%%%%%%%%%%%%%%%%%%%%%%%%%%%%%%%%%%%%%%%%%%%%%%%%%%%%%%%%%%%%%%%%%%%%%%%%%%%%%%%%%%%%%%%%%%%%%%%%%%%%%%%%%%%%%%%%%%%%%%%%%%%%%%%%%%%%%%%%%%%%%%%%%%%%%%%%%%%%%%%%%%%%%%%%%%%%%%%%%%%%%%%%%

\section{A Lower Bound for the Threshold Monitor Problem} \label{monitorLBproof:sec}

In this section we prove our lower bound for the threshold monitor problem (i.e., we prove \Cref{monitorLB:thm}), which directly implies our lower bound for the counter problem. We begin with an overview of the proof.

\subsection{Proof Overview}
%=%=%=%=%=%=%=%=%=%=%=%=%=%=%=%=%=%=%=%=%=%=%=%=%=%=
Recall that in the threshold monitor problem, the algorithm receives a sequence of updates $\Delta_1,\dots, \Delta_T \in \{0,1\}^T$ and needs to report ``HALT'' at some time step $t\in [T]$ such that $\sum_{i=1}^{t} \Delta_i \in [k/2,k]$, where $k = \frac{1}{2}\log(T)$. 
We want to show that no $(\eps = 0.1, \delta = \frac{1}{100k})$-DP algorithm can solve the problem with success probability $0.99$. Suppose such an algorithm exists and denote it by $\calA$. Then, on the sequence $D^{(0)} = (0,0,\dots, 0)$, $\calA$ cannot halt with probability larger than $0.01$.

To derive a contradiction, we want to use the fact that $\calA$ is differentially private to find a new data set $D^{(k)} \in \{0,1\}^T$ that contains $k$ ``1''s in the sequence, yet the algorithm halts on $D^{(k)}$ with probability at most $0.5$. 
If we just append $k$ ``1'' at the beginning of the sequence, the DP property of $\calA$ only promises that $\calA$ halts with probability at most:
 % \[
$0.01 e^{\eps k} + \sum_{i=1}^{k} \delta \cdot e^{(k-i) \eps}$,
 % \]
which is a vacous bound for $\delta = \frac{1}{100k}$.

In this naive attempt, we only considered the probability that $\calA$ ever halts on the input. By the DP property of the algorithm, the best we can say about this quantity is that it increases by a multiplicative factor of $e^{\eps}$ and an additive factor of $\delta$ when we move to an adjacent data set. The additive increment is easy to bound. However, to handle the multiplicative blow-up, we need to exploit the online nature of the problem and the long time horizon. 
Specifically, on an input sequence $S = (\Delta_1,\dots, \Delta_T)\in \{0,1\}^{T}$, we consider whether $\calA$ is more likely to halt on the stream's first or second half.

\begin{itemize}
    \item If $\calA$ more likely halts at time step $t\in [1, T/2]$, then we fix the first half of $S$, and recursively insert more ``$1$''s in the second half of $S$. Because $\calA$ is an online algorithm, only the second half of the stream will suffer from the ``$e^{\eps}$'' blow-up in the future.
    \item If $\calA$ more likely halts at time step $t\in [T/2+1,T]$, then we fix the second half of $S$, and recursively insert more $1$'s in the first half of $S$. In doing so, the behavior of $\calA$ in time interval $[T/2+1,T]$ might change. However, our construction will eventually append at least $k$ ``$1$''s before time step $T/2$. Therefore, ultimately, the algorithm's behavior on the second half of the stream is irrelevant.
\end{itemize}
Using this construction, we can start at $D^{(0)}= (0,\dots, 0)^{T}$, find a sequence $D^{(k)}$ as well as an index $\ell$, such that $D^{(k)}$ contains at least $k$ ``$1$''s before time step $\ell$, yet $\calA$ halts before time step $\ell$ with probability at most $0.5$. The key insight in the construction is intuitively explained as follows: after inserting each new item, we cut the stream into two halves and recurse into the left or right half, in a way that at most half of the probability mass will suffer from the $e^{\eps}$ blow-up in the future. Overall, the $e^{\eps}$ blow-up is reduced to a $\frac{1}{2}e^{\eps}$ multiplicative factor, which is less than $1$, and the overall halting probability is easily bounded.
%=%=%=%=%=%=%=%=%=%=%=%=%=%=%=%=%=%=%=%=%=%=%=%=%=%=

\subsection{The Formal Details}
We now present the formal construction. We prove the following restatement of Theorem~\ref{monitorLB:thm}.
\begin{theorem}
   Let $T > 1$ and $k \le \frac{1}{2}\log(T)-5$.
   There is no $(\frac{1}{2}, \frac{1}{100\cdot k})$-DP algorithm for the Threshold Monitor Problem (Problem~\ref{threshmon:problem}) with parameters $T,k$ that has success probability $0.99$.
\end{theorem}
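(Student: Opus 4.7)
The plan is to derive a contradiction by exhibiting an input $D^{(k)} \in \{0,1\}^T$ with $k$ ones on which $\calA$ cannot succeed. I will inductively build $D^{(0)}, D^{(1)}, \ldots, D^{(k)}$ together with a shrinking chain of intervals $I_0 \supset I_1 \supset \cdots \supset I_k$ satisfying $|I_{j+1}| = |I_j|/2$. Starting from $I_0 = [1,T]$ and $D^{(0)} = 0^T$, at each step $j$ I split $I_j = [a_j, b_j]$ at its midpoint into halves $L_j, R_j$, let $I_{j+1}$ be whichever half has the smaller probability that $\calA$ halts there on input $D^{(j)}$, and form $D^{(j+1)}$ by flipping position $p_j := a_{j+1} + j$ from $0$ to $1$. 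The hypothesis $k \le \tfrac{1}{2}\log T - 5$ implies $|I_k| = T/2^k \ge k$, which in turn guarantees that the $p_j$ are distinct, all lie in $I_{j+1}$, and all satisfy $p_j \le b_k$.

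Two inductive bounds drive the argument. First, let $H_j := \Pr[\calA\text{ halts in }I_j\text{ on }D^{(j)}]$. Because $I_{j+1}$ was chosen as the less-likely half, $\Pr[\calA\text{ halts in }I_{j+1}\text{ on }D^{(j)}] \le H_j/2$, and since $D^{(j+1)}$ differs from $D^{(j)}$ in exactly one bit, $(1/2, \delta)$-DP gives $H_{j+1} \le \tfrac{e^{1/2}}{2} H_j + \delta$. Because $e^{1/2}/2 < 1$ this recursion is a contraction, so $H_j$ stays bounded by a small absolute constant, starting from $H_0 \le \Pr[\calA\text{ halts on }0^T] \le 0.01$. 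Second, let $\hat V_j := \Pr[\calA\text{ halts in }[1, b_k]\text{ on }D^{(j)}]$. I decompose halting in $[1, b_k]$ on $D^{(j+1)}$ into halting in $[1, p_j - 1]$, which is identical to the $D^{(j)}$ value by the online property since the flip sits at position $p_j$, plus halting in $[p_j, b_k]$, which blows up by at most $e^{1/2}$ with additive $\delta$ by DP. Since $a_{j+1} \le p_j$ and $b_k \le b_{j+1}$, one has $[p_j, b_k] \subseteq I_{j+1}$, so the $D^{(j)}$-halting probability on $[p_j, b_k]$ is at most $H_j/2$. Therefore $\hat V_{j+1} \le \hat V_j + (e^{1/2}-1)\tfrac{H_j}{2} + \delta$; telescoping over $k$ steps and using the geometric bound $\sum_j H_j = O(H_0 / (1 - e^{1/2}/2)) = O(1)$ together with $k\delta \le 1/100$ keeps $\hat V_k$ well below $1/2$.

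The contradiction is then immediate: on $D^{(k)}$ the prefix sum reaches $k$ at time $p_{k-1} \le b_k$, so any algorithm with success probability at least $0.99$ must halt in $[1, b_k]$ with probability at least $0.99$, i.e., $\hat V_k \ge 0.99$, contradicting the upper bound just obtained. The most delicate piece of the plan is the simultaneous requirements on $p_j$: it must lie in $I_{j+1}$ (so the $H_j$ recursion produces the clean factor of $1/2$), it must satisfy $p_j \le b_k$ (so that $[p_j, b_k] \subseteq I_{j+1}$ in the $\hat V_j$ step), and the $p_j$ must be distinct so that each $D^{(j+1)}$ is a genuine $0 \to 1$ flip of $D^{(j)}$; the formula $p_j = a_{j+1} + j$ achieves all three precisely because $|I_k| \ge k$ under the hypothesis on $k$. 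A secondary subtlety is that $b_k$ is determined only at the end of the construction, but the $\hat V_j$ recursion uses only the inequality $b_k \le b_{j+1}$, which holds at every step.
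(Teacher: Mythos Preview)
Your proposal is correct and follows essentially the same adaptive-bisection strategy as the paper: both construct the hard instance by repeatedly halving an interval, recursing into the half with smaller halting probability, and inserting one ``1'' per step, exploiting that $e^{1/2}/2<1$ keeps the in-interval halting probability bounded. The only differences are bookkeeping: the paper places each new ``1'' at the left boundary of the current interval and tracks a single potential $e^{-\varepsilon}c_{\mathrm{total}}+\frac{1}{1-e^{\varepsilon}/2}\Phi$, whereas you place the ``1'' inside $I_{j+1}$ at the shifted position $a_{j+1}+j$ and run two separate recursions for $H_j$ and $\hat V_j$; both lead to the same numerical conclusion.
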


\begin{proof}
Fix $k$ and let $T = \sum_{i=1}^{k} 2^i$. It is clear that $k\le \frac{1}{2}\log(T)-5$. We show that there is no algorithm for the Threshold Monitor Problem with parameters  $k$ and $T$ (or more) updates. 
We assume that such an algorithm $\calA$ exists and derive a contradiction in the following.

For any fixed input sequence $D \in \{0,1\}^{T}$ and $t\in[T]$, 
an algorithm $\calA$ for Problem~\ref{threshmon:problem} can be fully described by a probability distribution on $[T]\cup \{\textrm{non}\}$ that corresponds to the time step $t\in [T]$ where it outputs $\top$ and halts. The event $\textrm{non}$ corresponds to the algorithm outputting $\bot$ on all time steps.  We use the notation $p^{D}_{t}$ for
the probability that $\calA$ outputs $\top$ (and thus halts) after the $t$-th update:
\begin{equation} \label{probmapA:eq}
p^{D}_{i} := \Pr[\calA(D) \text{ outputs $\top$ after the $i$-th update}].
\end{equation}

We start with the all-zero sequence $D^{(0)} = (0,0,\dots)$ and construct a ``hard sequence'' for algorithm $\calA$, as described in Algorithm~\ref{algo:construction}.

\newcommand{\ctotal}{c_{\mathrm{total}}}

\begin{algorithm2e}[H] \label{hardseq}
{\small
\LinesNumbered
    \caption{Construction of Hard Instance}
    \label{algo:construction}
    
    \DontPrintSemicolon
    
    \KwIn{
         Parameters $k$, $T = \sum_{j=1}^{k} 2^j$; A mapping $p^{D}_{i}$ as in \eqref{probmapA:eq} induced by an Algorithm $\calA:\{0,1\}^T$. 
    }
    Initialize $\ell \gets 1$, $r\gets T$, $D^{(0)}\gets (0,0,\dots, 0)$, and $c_\mathrm{total} \gets 0$ \;
    \For{$i = 1,\dots, k$}{
        $m\gets \frac{1}{2}(\ell + r + 1)$ \tcp*{midpoint $m$ is always an integer} 
        $c_{\textrm{left}} \gets \sum_{j=\ell}^{m-1} p^{D^{(i-1)}}_j$ \;
        $c_{\textrm{right}} \gets \sum_{j=m}^{r} p^{D^{(i-1)}}_j$ \;
        \If{$c_{\mathrm{left}} \le c_{\mathrm{right}}$}{
            Set the $\ell$-th entry of $D^{(i-1)}$ to $1$ to obtain $D^{(i)}$ \;
            $\ctotal \gets \ctotal + p_{\ell}^{D^{(i)}}$ \;
            $\ell\gets \ell + 1$ \;
            $r\gets m - 1$ \; 
        }
        \Else{
            Set the $m$-th entry of $D^{(i-1)}$ to $1$ to obtain $D^{(i)}$ \;
            $c_{\mathrm{total}} \gets \ctotal + c_{\mathrm{left}} + p^{D^{(i)}}_{p}$ \;
            $\ell \gets m + 1$ \;
        }
    }
    \KwRet{$\ell, \ctotal$ and $D^{(k)}$}
    }
\end{algorithm2e}

Note that $D^{(k)}$ is an input sequence with $k$ ``$1$''s before the $\ell$-th update. Thus, a correct algorithm has to output $\top$ before the $\ell$-th update arrives. However, we will show that, assuming the algorithm is differentially private and is correct on the first sequence $D^{(0)}$ with probability $0.99$, the algorithm must err on the input $D^{(k)}$ with probability at least $\frac{1}{2}$, contradicting to the correctness of the algorithm. 
We now give the proof. We start with the following simple observation.

\begin{claim}
    Consider \Cref{algo:construction}. After the algorithm finishes, it holds that
    \[
    \ctotal = \Pr[\calA(D^{(k)}) \text{ outputs $\top$ before or at the $\ell$-th update}].
    \]
\end{claim}

\begin{proof}
By definition, we have
\[
\Pr[\calA(D^{(k)}) \text{ outputs $\top$ before or at the $\ell$-th update}] = \sum_{i=j}^r p^{D^{(k)}}_j.
\]
On the other hand, considering the execution of \Cref{algo:construction}, we observe that
$
\ctotal = \sum_{j=1}^{\ell} p^{D^{(k_j)}}_j,
$ 
where we use $k_j$ to denote the round in which the interval $[\ell, r]$ in \Cref{algo:construction} updates to a new interval $[\ell', r']$ with $\ell' > j$. Observe that $p^{D^{(k_j)}}_j = p^{D^{(k)}}_j$
for every $j$. (This follows because after the $k_j$-th round, we only update the $\ell'$-th to the $r'$-th entries of the input, and the algorithm is ignorant to future updates at time $j$.) The claim now follows.
\end{proof}

The next claim gives an upper bound for $\ctotal$.

\begin{claim}
    Consider \Cref{algo:construction}. After the execution finishes, we have 
    $
    \ctotal \le \frac{e^{\eps}\left(0.01 + 2 k \delta\right)}{1 - \frac{1}{2} e^\varepsilon}.
    $
\end{claim}

\begin{proof}
    We shall use a potential argument. Before the $i$-th round of \Cref{algo:construction}, let the current interval be $[\ell, r]$ and define the current \emph{potential} of the algorithm by
    $
    \Phi^{(i-1)} = \sum_{j=\ell}^r p^{D^{(i-1)}}_{j}.
    $
    Let $\ctotal^{(i-1)}$ be the value of $\ctotal$ before the $i$-th round. We will examine (with some foresight) the following quantity
    $
    e^{-\eps} \ctotal^{(i)} + \frac{1}{1-\frac{1}{2}e^{\eps}} \Phi^{(i)}.
    $
    First, assuming the algorithm errs with probability at most $0.01$ on the input $D^{(0)}$, it follows that
    \[
    e^{-\eps} \ctotal^{(0)} + \frac{1}{1-\frac{1}{2}e^{\eps}} \Phi^{(0)} \le 0 + \frac{0.01}{1-\frac{1}{2}e^\eps}.
    \]
    Consider how $\ctotal$ and $\Phi$ change from round $i-1$ to round $i$. Note that $\ctotal^{(i)}$ is obtained by adding a subset $I\subset [\ell, r]$ of $p^{D^{(i)}}_j$ to $\ctotal^{(i-1)}$. Assuming the algorithm is $(\eps,\delta)$-DP, we have
    \[
    \begin{aligned}
    \ctotal^{(i)} &= \ctotal^{(i-1)} + \sum_{j\in I} p^{D^{(i)}}_j % \\
    % &
    \le \ctotal^{(i-1)} + (e^{\eps} \sum_{j\in I} p^{D^{(i-1)}}_j + \delta) % \\
    % &
    \le \ctotal^{(i-1)} + e^{\eps} \Phi^{(i-1)} + \delta.
    \end{aligned}
    \]
    On the other hand, $\Phi^{(i)}$ consists of contribution from a sub-interval $J$ of $\Phi^{(i-1)}$, where the interval has a total contribution being at most half of $\Phi^{(i-1)}$. Again assuming the algorithm is $(\eps,\delta)$-DP, we have
    $
    \Phi^{(i)} 
    %&
    = \sum_{j\in J} p^{D^{(i)}}_j % \\
    %&
    \le \delta + e^\eps \cdot \sum_{j\in J} p^{D^{(i-1)}}_j %\\
    %&
    \le \frac{1}{2} e^\eps \Phi^{(i-1)} + \delta.
    $
    Thus,
    \[
    \begin{aligned}
    e^{-\eps} \ctotal^{(i)} + \frac{1}{1-\frac{1}{2}e^{\eps}} \Phi^{(i)}
    &\le e^{-\eps} \left( \ctotal^{(i-1)} + e^{\eps} \cdot \Phi^{i-1} +\delta \right) + \frac{1}{1-\frac{1}{2}e^\eps} \left( \frac{1}{2} e^\eps \Phi^{(i-1)} + \delta \right) \\
    &= \left( e^{-\eps} \ctotal^{(i-1)} + \frac{1}{1-\frac{1}{2}e^{\eps}} \Phi^{(i-1)} \right) + \frac{2\delta}{1-\frac{1}{2}e^{\eps}}.
    \end{aligned}
    \]
    Consequently,
    $
    e^{-\eps} \ctotal^{(k)} + \frac{1}{1-\frac{1}{2}e^{\eps}} \Phi^{(k)} \le \frac{0.01 + 2k\delta}{1-\frac{1}{2}e^\eps},
    $
    which implies the desired conclusion $\ctotal^{(k)} \le \frac{e^{\eps}(0.01 + 2k\delta)}{1-\frac{1}{2}e^\eps}$ as $\Phi$ is always non-negative.
\end{proof}

For our choice of $\eps = \frac{1}{2}$ and $\delta = \frac{1}{100 k}$, we have $\frac{e^{\eps}(0.01 + 2k\delta)}{1-\frac{1}{2}e^\eps} \le \frac{1}{2}$. Hence, combining the two claims above shows that the algorithm errs on the input $D^{(k)}$ with a probability of at least $0.5$, completing the proof.
\end{proof}

\section{Online Threshold Queries}

Recall that in the {\em threshold queries problem}, we get an input dataset $D$ containing $n$ points from $[0,1]$. Then, for $T$ rounds, we obtain a query $q\in[0,1]$ and need to respond with an approximation for the number of points in $D$ that are smaller than $q$. In this section we show that our lower bound extend also to this problem. We begin with a proof overview.

\subsection{Proof Overview}
The key challenge in extending our lower bound to the online threshold query problem is that the data set is given all at once. To motivate, let us consider a simple reduction:
\begin{itemize}
    \item For an input sequence $(\Delta_1,\dots, \Delta_T) \in \{0,1\}^{T}$ to the counter problem, construct a data set $D = \{\frac{i}{T+1}: \Delta_i = 1\}$. Then, issue a set of $T$ queries $\{q_i = \frac{i}{T}\}_{1\le i\le T}$ to the threshold query algorithm and report the answers to these queries.
\end{itemize}
For the offline version of the problem, this reduction preserves the privacy parameters and the correctness perfectly. However, this reduction itself does not work in the online setting: at any time step $t\in [1, T]$, the algorithm does not know the future inputs $\Delta_{t+1},\dots, \Delta_{T}$. Therefore, the data set $D$ cannot be constructed at all without seeing the whole stream.

To carry out the reduction for the online version of the problem, we only need one more property for the threshold query algorithm: on a query $q\in [0,1]$, the algorithm's response only depends on the set $D \cap [0,q]$. Namely, it is ignorant of all the data points with values larger than $q$. This condition is not necessarily satisfied by an arbitrary algorithm. Still, for any fixed algorithm $\calA$, it is possible to identify a set of $T$ points $0 < p_1,\dots, p_T < 1$ (possibly depending on $\calA$), such that the condition \emph{is} satisfied for any data set $D\subseteq \{p_1,\dots, p_T\}$ that only contains a subset of these $T$ points. The existence of such $T$ points is guaranteed by the hypergraph Ramsey theory. The high-level idea is easy to describe: Knowing the set $S = D\cap [0,q]$, for every possible data set $D$ that contains $S$, the algorithm's behavior on $q$ can be thought of as a ``coloring'' to the data set $D$. Since there are infinitely many points in $(q,1]$, one can use the Ramsey theory to identify an arbitrarily large monochromatic clique among $(q,1]$. That is, a set of data points $P\subseteq (0,q]$ such that the algorithm behaves similarly on any data set $D = S\cup P'$ with $P'\subseteq P$. Note that this is exactly what our reduction asks for.

\subsection{The Formal Details}
We are now ready to prove \Cref{thm:threshld-query}. We will do so by proving lower bounds for the following threshold version of the problem.

\begin{itemize}[itemsep=0px,topsep=4px]
    \item The input $D$ consists of $k$ data points, each taking values from $[0,1]$. The algorithm receives all of the set $D$ in advance.
    \item The algorithm then processes a sequence of queries $q_1\le q_2\le \dots \le q_T$ where each query $q_i\in [0,1]$ is described by a real number. Let $c_i = |\{j: x_j\le q_i\}|$.
    \begin{itemize}
        \item If $c_i \le \frac{1}{2}k$, the algorithm reports $\perp$.
        \item If $c_i \ge k$, the algorithm reports $\top$.
        \item If $c_i \in (k/2,k)$, the algorithm may report either $\perp$ or $\top$.
        \item Once the algorithm reports a $\top$, it halts.
    \end{itemize}
\end{itemize}
We say the algorithm succeeds if all the queries before halting are answered correctly. 
For $k = \frac{1}{2}\log(T)$ and $\eps = 0.1, \delta = \frac{1}{100k}$, we show that no $(\eps,\delta)$-DP algorithm can solve the problem with success probability more than $0.99$.

\subsubsection{Ramsey Theory}

We need a well-known result from Ramsey theory. Consider the complete $k$-uniform hypergraph over vertex set $V$ (that is to say, for every subset of size $k$, there is a hyperedge connecting them). For any integer $c\ge 1$, a $c$-coloring of the graph is a mapping $f:\binom{V}{k}\to [c]$ which assigns a color to each hyperedge. Given a mapping $f$, a subset $S\subseteq V$ of vertices is called a \emph{monochromatic clique} if all hyperedges inside $S$ are colored using the same color. 
Then, the hypergraph Ramsey theorem says the following.
\begin{lemma}
    Fix any finite $k,c, n\ge 1$. Then, there is a sufficiently large $N$ such that, for every $c$-coloring to a complete $k$-uniform hypergraph with at least $N$ vertices, a monochromatic clique of size at least $n$ exists.
\end{lemma}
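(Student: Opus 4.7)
The plan is to prove the hypergraph Ramsey lemma by induction on the uniformity parameter $k$, following the classical inductive argument. For the base case $k=1$, the statement reduces to ordinary pigeonhole: any $c$-coloring of $N := c(n-1)+1$ singletons must place $n$ of them in a common color class.

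For the inductive step, assume the conclusion for $(k-1)$-uniform hypergraphs. Given a $c$-coloring $\chi\colon \binom{V_0}{k}\to [c]$ on a sufficiently large vertex set $V_0$, I would iteratively build a sequence $v_1,v_2,\dots,v_M \in V_0$, associated colors $\chi_1,\dots,\chi_M \in [c]$, and nested ``live'' sets $V_0 \supsetneq V_1 \supsetneq \cdots \supsetneq V_M$ as follows. At step $i$, pick an arbitrary $v_i \in V_{i-1}$ and define the induced $(k-1)$-uniform coloring $\phi_i(S) := \chi(S\cup\{v_i\})$ on $\binom{V_{i-1}\setminus\{v_i\}}{k-1}$. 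Applying the inductive hypothesis with a suitably chosen target clique size extracts a large $\phi_i$-monochromatic subset $V_i \subseteq V_{i-1}\setminus\{v_i\}$; I then let $\chi_i$ be its common color and continue until $M := c(n-1)+1$ steps have been completed.

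Pigeonhole applied to $\chi_1,\dots,\chi_M$ then yields indices $i_1<\cdots<i_n$ whose associated colors all equal some common value $c^\star$, and the set $\{v_{i_1},\dots,v_{i_n}\}$ is the desired $\chi$-monochromatic clique. Indeed, for any $k$-subset $\{v_{j_1},\dots,v_{j_k}\}$ drawn from this set with $j_1<\cdots<j_k$, the nesting forces $v_{j_2},\dots,v_{j_k}\in V_{j_1}$, so $\chi(\{v_{j_1},\dots,v_{j_k}\}) = \phi_{j_1}(\{v_{j_2},\dots,v_{j_k}\}) = \chi_{j_1} = c^\star$.

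The only real bookkeeping is the choice of target clique sizes $s_i$ handed to the inductive hypothesis, together with the matching size of $V_0$, so that the procedure survives all $M$ iterations. I would handle this by working backward from a small $s_M$, taking $s_{i-1}$ to be the $(k-1)$-uniform Ramsey number that guarantees a monochromatic clique of size $s_i+1$ among $c$ colors, and finally setting $N := s_0$. I do not anticipate any substantive obstacle; the argument is standard, and since the lemma asserts only the existence of a finite $N$, the resulting tower-type explicit bound need not be made precise.
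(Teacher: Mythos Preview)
Your argument is the standard induction-on-uniformity proof of the hypergraph Ramsey theorem and is correct. Note, however, that the paper does not actually prove this lemma: it simply quotes it as a well-known result from Ramsey theory and uses it as a black box in the subsequent reduction. So there is no ``paper's proof'' to compare against; your write-up supplies what the paper deliberately omits. The minor off-by-one slack in your backward recursion for the $s_i$ is harmless, exactly as you note, since only the existence of a finite $N$ is claimed.
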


\subsubsection{The Key Lemma}

% {\color{red} This section is not complete.}

We need the following technical lemma for the reduction to go through.

\begin{lemma}\label{lemma:apply-ramsey}
    Let $\calA$ be an algorithm for the online threshold problem that operates on $k$ data points. Then, for any $\eta > 0$, there exists a subset $0 < x_1 < \dots < x_T$ of $T$ points satisfying the following. For any fixed data set $S\subseteq \{x_1,\dots, x_T\}, |S| = k$ and the sequence of queries $x_1,\dots, x_T$, there exists a sequence of real numbers $r_1\le r_2\le \dots \le r_T$, for which the following holds.
    \begin{itemize}[leftmargin=10px,itemsep=0px]
        \item For every $i\in [T]$, $\big| r_i - \Pr[\calA(S) \text{ outputs $\top$ for the $i$-th query} \mid \text{$\calA(S)$ did not halt before}]\big| \le \eta$.
        \item $r_i$ depends only on the set $S\cap [0, x_i]$. Namely, it is oblivious to data points larger than $x_i$.
    \end{itemize}
\end{lemma}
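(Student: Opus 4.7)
My plan is to apply the hypergraph Ramsey theorem (in its infinite form, which follows from the finite statement given in the paper by compactness) iteratively to an infinite candidate sequence $p_1 < p_2 < \cdots$ of points in $(0,1)$. Let $M = \lceil 2/\eta \rceil$ be the number of discretization levels for probabilities. The objective is to extract an infinite sub-sequence $B \subseteq \{p_1, p_2, \ldots\}$ with the following \emph{suffix-oblivious} property: for every $k$-subset $S$ drawn from $B$ and every query $q \in B$, the discretized conditional halting probability of $\calA(S)$ at $q$ (given no prior halt, along the natural increasing sequence of queries) depends, up to additive error $\eta$, only on $q$ and on $S \cap [0, q]$. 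Once $B$ is obtained, I pick any $T$ of its elements as $x_1 < \cdots < x_T$ and define $r_i$ to equal this discretized conditional probability, which by construction is a function only of $S \cap [0, x_i]$ and $x_i$.

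To construct $B$, I would set up a coloring on $(k+1)$-subsets of the candidate set whose color at $\{a_0 < a_1 < \cdots < a_k\}$ records, for each $j \in \{0,1,\ldots,k\}$, the discretized conditional halting probability of $\calA$ on the data set $\{p_{a_\ell} : \ell \ne j\}$ at probe query $p_{a_j}$, with the queries preceding $p_{a_j}$ being the corresponding elements of $A$ up to $p_{a_j}$. Invoking the infinite hypergraph Ramsey theorem produces an infinite monochromatic sub-sequence on which these values are fixed functions of $j$ across all $(k+1)$-subsets. If necessary, I would iterate this argument across prefix sizes $j \in \{0,1,\ldots,k\}$, each iteration passing to an infinite sub-subsequence, so that the invariance holds for every prefix size at every query position. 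Monochromaticity gives the suffix-obliviousness property, because any two suffixes extending a fixed prefix give rise to $(k+1)$-subsets of the same color and hence the same discretized probability. Since we need only $T$ elements at the end, the infiniteness of $B$ is more than sufficient.

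The main obstacle is setting up the coloring so that monochromaticity genuinely forces suffix-obliviousness with respect to every combination of prefix size and query position, rather than merely yielding ``positional'' invariance across different $T$-subsets of $B$. The conditional halting probability at a query depends on the entire preceding query history, so the coloring must either implicitly encode this history or be stable under its variation; this is what forces the iteration over $(j, q\text{-position})$ pairs. A secondary, minor issue is the monotonicity requirement $r_1 \le r_2 \le \cdots \le r_T$: this holds automatically for any algorithm solving the threshold variant of the problem, since the threshold is crossed monotonically in the online query sequence and the conditional hazard rate is non-decreasing; otherwise one may enforce monotonicity by a small rounding absorbing an additional $O(\eta)$ additive loss.
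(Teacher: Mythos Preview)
Your high-level plan—use Ramsey theory to extract a set of points on which the algorithm's conditional halting probability at each query is oblivious to the data points to the right of that query—matches the paper's. But the specific $(k+1)$-coloring you propose does not achieve this, and the flaw is exactly the ``main obstacle'' you name without resolving. Your coloring of $A = \{a_0 < \cdots < a_k\}$ records the conditional halting probability at $a_j$ \emph{when the preceding queries are $a_0, \ldots, a_{j-1}$}. In the lemma's setting, however, the query sequence is all of $x_1, \ldots, x_T$, so the queries preceding $x_i$ include many points outside the $(k+1)$-subset $S \cup \{x_i\}$. The algorithm sees the full data set $S$ up front and then processes each query in turn; its state after $x_1, \ldots, x_{i-1}$ can depend on every one of those intermediate (non-data) queries. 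Monochromaticity of your coloring therefore controls the wrong experiment: it fixes the halting probability under a query history of length $|S \cap [0,x_i]|$, not length $i-1$, and these need not agree.

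The paper handles this by fixing the query points sequentially. Having committed to $x_1, \ldots, x_{i-1}$ and a surviving candidate set, it takes $x_i$ to be the smallest survivor and then, for each possible prefix $P = S \cap \{x_1, \ldots, x_i\}$, applies Ramsey to the $(k-|P|)$-uniform hypergraph on the remaining candidates, colored by the discretized conditional halting probability at $x_i$ for data set $P \cup Q$ and the now-fixed query history $x_1, \ldots, x_i$. Because the query history is pinned down before each Ramsey step, the issue disappears. Your phrase ``iteration over $(j, q\text{-position})$ pairs'' gestures toward exactly this, but to make it a proof you must carry out the sequential fixing explicitly; one global $(k+1)$-coloring does not suffice. (As a side remark, your justification for $r_1 \le \cdots \le r_T$ via a ``non-decreasing hazard rate'' is unfounded—nothing forces this of an arbitrary algorithm—and in fact the paper's own proof does not establish this monotonicity either; the downstream reduction does not use it.)
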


\begin{proof}
Let $N$ be a large enough integer to be specified later. We construct $x_i,p_i$ in the increasing order of $i$. To start, consider the discrete ensemble of points $P = \{p_i = \frac{i}{N+1}\}_{1\le i\le N} \subseteq [0,1]$.

We set $x_1:= p_1$. Then, consider the $k$-uniform complete hypergraph with vertex set $V = P\setminus \{p_1\}$. Let $S\subseteq V$ be a hyperedge over $k$ vertices. We color the edge with the color
\[
\left\lfloor \frac{1}{\eta}\cdot \Pr[A(S) \text{ outputs $\top$ for the first query}] \right\rfloor \in [0,\frac{1}{\eta}].
\]
As the number of colors is finite, for any desired $N^{(1)} > 1$, there exists $N$ such that there exists a $N^{(1)}$-monochromatic clique. Denote by $P^{(1)}\subseteq P\setminus \{p_1\}$ the subset.

Next, we consider a $(k-1)$-uniform complete hypergraph with vertex set $P^{(1)}$. Let $S\subseteq P^{(1)}$ be a hyperedge over $(k-1)$ vertices. Color the edge with the color
\[
\left\lfloor \frac{1}{\eta}\cdot \Pr[A(S\cup \{x_1\}) \text{ outputs $\top$ for the first query} ] \right\rfloor.
\]
Again, for any desired $N^{(2)}$, there exists $N^{(1)}$ so that the existence of an $N^{(2)}$-monochromatic clique is guaranteed. Denote by $P^{(2)}\subseteq P^{(1)}$ the subset.

Now, note that if we choose $x_1$ to be $p_1$ and choose $x_2,\dots, x_T$ from $P^{(2)}$, then the two conditions in the theorem statement are satisfied for $i = 1$. We proceed to choose $x_2,\dots, x_T$ from $P^{(2)}$ so that the theorem is also satisfied for $i \ge 2$. 
We simply choose $x_2$ as the smallest element in $P^{(2)}$. Then, we apply Ramsey's theory four times, as follows:
\begin{itemize}[leftmargin=20px]
    \item First, We identify a subset $P^{(3)}$ of $P^{(2)}$ such that
    \[
    \Pr[\calA(S\cup \{x_1,x_2\}) \text{ outputs $\top$ for the $2$nd query }]
    \]
    is almost the same (i.e., up to an error of $\eta$) for every $S\subseteq P^{(2)}$.
    \item Continuing, we apply Ramsey's theory three more times to identify $P^{(6)}\subseteq P^{(5)}\subseteq P^{(4)}\subseteq P^{(3)}$ such that, for every data set $S\subseteq P^{(6)} \cup \{x_1,x_2\}$, the quantity
    \[
    \Pr[\calA(S) \text{ outputs $\top$ for the $2$nd query }]
    \]
    only depends on $S\cap \{x_1,x_2\}$ up to an error of $\eta$.

    In total, we need to apply Ramsey's theory four times because the intersection $S\cap \{x_1,x_2\}$ for a data set $S\subseteq P^{(6)}\cup \{x_1,x_2\}$ has four possibilities.
\end{itemize}

After this step, we have identified $x_1,x_2$ and a set $P^{(6)}$ such that the theorem statements are satisfied for $i=1,2$ as long as we choose $x_3,\dots, x_{T}$ from $P^{(6)}$. By setting $N$ to be large enough, we can make $P^{(6)}$ arbitrarily large. We apply the same idea in the following stages. Namely, at each stage $i\in [T]$, we choose $x_i$ as the smallest element in the current set of ``active points'' $P^{(t_i)}$. Then, we apply Ramsey's theory enough times to identify $P^{(t_{i+1})}$ such that the theorem is satisfied for the $i$-th query as long as we choose $x_{i+1}\dots, x_{T}$ from $P^{(t_{i+1})}$. Once we have chosen $x_T$, the construction is complete.

If we start with a sufficiently large $N$ and point set $P = \{\frac{i}{N+1}\}_{i\in [N]}$, we can guarantee the success of each application of Ramsey's theory. This completes the proof.
\end{proof}

\subsubsection{The Reduction}

We are ready to finish the reduction.

\begin{theorem}\label{theo:ramsey-reduction}
Let $\calA$ be an $(\varepsilon,\delta)$-DP algorithm for the online threshold algorithm that operates on $k$ data points and has failure probability at most $\beta$. For every $\eta > 0$, let $\{x_1,\dots, x_T\}\subseteq [0,1]$ be as given in \Cref{lemma:apply-ramsey} with parameter $\eta$. Then, there exists an algorithm $\calB$ for the online Counter Problem that is $(\varepsilon,\delta+ T\eta)$-DP and has failure probability $T\eta + \beta$.
\end{theorem}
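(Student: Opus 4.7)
The plan is to build $\calB$ as an online simulator of $\calA$ that never looks into the future of the stream. Given an input stream $\Delta_1,\dots,\Delta_T \in \{0,1\}^T$ for the counter problem, write $S^{\Delta} = \{x_j : \Delta_j = 1\}$ for the corresponding subset of the Ramsey-chosen points, and first suppose $|S^{\Delta}| = k$. At each time step $i \in [T]$, $\calB$ simulates presenting the query $x_i$ to $\calA$ on dataset $S^{\Delta}$: it consults $\Delta_1,\dots,\Delta_i$ (which determines $S^{\Delta} \cap [0,x_i]$) to compute the value $r_i$ guaranteed by \Cref{lemma:apply-ramsey}, and then independently outputs $\top$ with probability $r_i$ (halting) and $\bot$ otherwise. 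The critical property used is that $r_i$ depends only on $S^{\Delta} \cap [0,x_i]$, so $\calB$ is indeed a legitimate online algorithm.

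For correctness, I would couple $\calB(\Delta)$ with an ideal process $\calB^{\star}(\Delta)$ whose conditional halting probability at step $i$ equals the true conditional halting probability $q_i$ of $\calA(S^{\Delta})$ on query sequence $x_1,\dots,x_T$, rather than the approximation $r_i$. Since $|q_i - r_i| \le \eta$ per step, a step-wise Markov-chain coupling shows the total variation distance between the halting-time distributions of $\calB(\Delta)$ and $\calA(S^{\Delta})$ is at most $T\eta$. Because the monitor counter $n_i = \sum_{j \le i}\Delta_j$ equals the threshold-query count $c_i = |S^{\Delta} \cap [0,x_i]|$ under the reduction, a successful run of $\calA(S^{\Delta})$ translates into a successful run of $\calB(\Delta)$, so the overall failure probability is at most $\beta + T\eta$.

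Privacy follows by chaining three closeness statements. For neighbors $\Delta,\Delta'$ (differing in one coordinate), the induced datasets $S^{\Delta},S^{\Delta'}$ are neighbors as inputs to $\calA$, so $(\eps,\delta)$-DP of $\calA$ gives $\calA(S^{\Delta}) \approx_{\eps,\delta} \calA(S^{\Delta'})$. Combined with the two total-variation bounds $\calB(\Delta) \approx_{0,T\eta} \calA(S^{\Delta})$ and $\calA(S^{\Delta'}) \approx_{0,T\eta} \calB(\Delta')$ from the correctness step, standard indistinguishability composition yields $\calB(\Delta) \approx_{\eps,\delta + T\eta} \calB(\Delta')$ (the spurious $e^{\eps}T\eta$ terms are absorbed into the slack).

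The main obstacle I foresee is that \Cref{lemma:apply-ramsey} is stated only for datasets of exactly size $k$, while the counter problem allows streams with any $1$-count in $[0,T]$; DP must hold between neighboring pairs whose counts may straddle $k$. I would patch this by enlarging the Ramsey construction: add $k$ ``pad'' points $y_1<\cdots<y_k$ strictly above $x_T$ and run the Ramsey argument over the vertex set $\{x_1,\dots,x_T,y_1,\dots,y_k\}$, so that \Cref{lemma:apply-ramsey}'s conclusion holds for every size-$k$ subset containing $S^{\Delta}$ for any stream $\Delta$ with $|S^{\Delta}|\le k$. Then $\calB$ simulates $\calA$ on the padded dataset $S^{\Delta} \cup \{y_1,\dots,y_{k-|S^{\Delta}|}\}$. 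Since every $y_j$ exceeds all queries $x_i$, the pads contribute nothing to the counts $c_i$ or to the Ramsey-derived $r_i$, and neighboring streams' paddings differ by at most one point, preserving the privacy accounting above.
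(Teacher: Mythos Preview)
Your proposal is correct and follows essentially the same construction as the paper: simulate $\calA$ on the induced dataset $S^{\Delta}$ by outputting $\top$ at step $i$ with probability $r_i$, use the $T\eta$ total-variation bound for correctness, and chain those bounds with $\calA$'s DP guarantee for privacy. You go further than the paper by flagging the size-$k$ constraint in \Cref{lemma:apply-ramsey}, which the paper's proof simply glosses over (it restricts attention to streams of Hamming weight at most $k$ without explaining how $\calA$ is invoked on smaller sets). One small wrinkle in your padding fix: under the paper's add/remove neighbor relation, the padded datasets for neighboring streams can differ in \emph{two} points (one $x_j$ removed, one pad $y_j$ added), costing a factor of two in the privacy parameters; this is harmless for the application, and is avoided entirely once you observe that the $r_i$'s---and hence $\calB$'s output law---do not depend on the padding at all, so you may compare $\calA$ on two datasets sharing the \emph{same} pad set and differing only in the single point $x_j$.
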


Note that the lower bound for the threshold query problem (i.e., \Cref{thm:threshld-query}) follows by combining \Cref{theo:ramsey-reduction} and \Cref{monitorLB:thm}.

\begin{proof}
We describe the algorithm $\calB$.
\begin{itemize}
    \item $\calB$ maintains a subset $S\subseteq \{x_1,\dots, x_T\}$, which we think of as the ``input'' to $\calA$.
    \item For each $i\in [T]$, $\calB$ receives the $i$-th update $\Delta_i$. If $\Delta_i = 1$, $\calB$ inserts $x_i$ to $S$. Otherwise, $\calB$ does not update $S$ in this round. Then, $\calB$ invokes $\calA$ with the partial data set $S$ and query $x_i$. By Theorem~\ref{lemma:apply-ramsey}, knowing only $S\subseteq \{x_1,\dots, x_i\}$, there is $r_i\in [0,1]$ that captures the halting probability of $\calA$ up to an error of $\eta$. $\calB$ simply returns $\top$ with probability $r_i$. This finishes the description of the algorithm $\calB$.
\end{itemize}
For the correctness, note that for any update sequence $\Delta = (\Delta_1,\dots, \Delta_T)$ of Hamming weight at most $k$, it induces a data set $S = \{x_i : \Delta_i = 1\}$. Furthermore, the output distribution of $\calB$ on $\Delta$, and the output of $\calA$ on $S$ (and queries $x_1,\dots, x_T$) differ by at most $\eta\cdot T$ in statistical distance. Furthermore, the correctness criteria of the two problems are equivalent. Hence, if $\calA$ has failure probability $\beta$, we know that $\calB$ has failure probability at most $\eta \cdot T + \beta$.

For the privacy claim, simply note that two adjacent update sequences $\Delta, \Delta'$ induce two adjacent data sets $S,S'$. Thus, the privacy of $\calB$ follows from the privacy of $\calA$ and the similarity of output distributions between $\calA$ and $\calB$.
\end{proof}

\newpage

\bibliography{mybib}

\newpage
\appendix

\section{Lower Bound for Private Online Learning} \label{sec:lowerboundOnlineLearning}

In this section we prove Theorem~\ref{thm:onlineLearning}. 
Recall that this theorem states that every private online learner for the class of all point functions over a somain $\XXX$ must make $\Omega(\log T)$ mistakes with constant probability. We first give a proof sketch for a simplified lower bound that holds only for {\em proper} learners, i.e., for learners that on every time step $i$ choose a hypothesis $h_i$ that is itself a point function. 
For simplicity, in this intuitive explanation we will assume that $|\XXX|\gg T$, say $|\XXX|=T^2$. Suppose towards contradiction that there exists a private online {\em proper} learner $\AAA$ for the class of all point functions with error $k\ll\log T$. We now use $\AAA$ in order to construct an algorithm $\MMM$ that solves the threshold monitor problem (Problem~\ref{threshmon:problem}), which will contradict Theorem~\ref{monitorLB:thm}. At the beginning of the execution, algorithm $\MMM$ chooses a random point $x^*\in\XXX\setminus0$. Then, on every time step $i$ we do the following: 
\begin{enumerate}
    \item Algorithm $\MMM$ obtains an input bit $b_i\in\{0,1\}$.
    \item If $b_i=0$ then set $x_i=0$. Otherwise set $x_i=x^*$.
    \item Feed the input $x_i$ to algorithm $\AAA$, obtain a hypothesis $h_i$, and then feed it the ``true'' label $b_i$.
    \item If $h_i(x^*)=0$ then output $\bot$ and continue to the next step. Otherwise output $\top$ and halt.
\end{enumerate}

Now, whenever $\AAA$ makes at most $k$ mistakes, then algorithm $\MMM$ halts {\em before} the $(k+1)$th time in which the input bit is 1. On the other hand, observe that before the first time that we get a positive input bit, then algorithm $\AAA$ gets no information on $x^*$, and hence the probability that $h_i(x^*)=1$ is very small (at most $1/|\XXX|$). This is because when $\AAA$ is a proper learner, then $h_i$ can label at most 1 point from $\XXX$ as 1, and $x^*$ is chosen at random. Hence, when $|\XXX|$ is large enough, we get that w.h.p.\ algorithm $\MMM$ never halts before the first time that we get a positive input bit. This contradicts our impossibility result for the threshold monitor problem.\footnote{Technically, in the definition of the threshold monitor problem we required the algorithm to output $\bot$ whenever the number of ones in the stream is at most $k/2$, while here we allowed the algorithm to output $\top$ immediately after the first time that the stream contains a positive input. This is a minor issue which we ignore for the sake of this simplified explanation. }

We now proceed with the formal proof, which is more complex in order to capture also the improper case. Intuitively, the extra difficulty is that in the improper case the algorithm might choose hypotheses that label the point $x^*$ as 1 even before seeing any instance of this point.

\medskip
\begin{proof}{\bf of Theorem~\ref{thm:onlineLearning}}\;\;
    Suppose for the sake of contradiction that such a learner exists and denote it by $\mathcal{A}$.

    Denote $k = \sqrt{T}$, 
    and consider the following distribution over query sequences.  
    \begin{itemize}
        \item For the $i$-th query when $i\in [1,k]$, send $(x_i,0)$ where $x_i\sim \Uniform[1,k]$ (where $\Uniform[a,b]$ is the uniform distribution on $\{a,\ldots,b\}$).
        \item For the $i$th query when $i\in \{k+1,\ldots,2k\}$, send $(x_i, 0)$ where $x_i\sim \Uniform[k+1, 2k]$.
        \item In general for $j\in [k]$, for the $i$th query where $i\in \{(j-1)k+1, \ldots, jk\}$, send $(x_i, 0)$ where $x_i\sim \Uniform[(j-1)k,jk]$ 
    \end{itemize}
We use $k$ phases of $k$ queries each.
    For each phase $j\in [k]$ there is an allocated disjoint set of $k$ points $X_j := \{(j-1)k+1,\ldots,jk\}$. The query points at each step of phase $j$ is selected uniformly at random from $X_j$.
    
    Under this query distribution, for every $i,j$ we denote
    $p_i^{(j)} := \Pr[h_i(j) = 1],$ where $h_i$ is the hypothesis released at the $i$-th step by  $\mathcal{A}$.
    Note that the events $\mathbf{1}\{h_i(j) = 1\}$ depend on the randomness of the query sequence and the algorithm and are not necessarily independent.

Note that the query distribution we defined always chooses the label 0. This might seem strange at first glance. What we show, however, is that a private online learner cannot keep releasing the all-zero hypothesis (whereas a non-private learner would naturally do this with this query distribution). That is, a private learner must sometimes ``gamble'' and output a hypothesis that is not all zero. But these ``gambles'' (as we show) will inflate the mistake bound of the non-private learner and yield the desired lower bound. Formally, the analysis proceeds by inspecting the following two cases:

\paragraph*{Case 1.} For every phase $j\in [k]$ and every $k'\in [(j-1)k+1,jk]$, we have
\[
\sum_{i=(j-1)k+1}^{jk} p_i^{(k')} \ge 0.001,
\]
that is, the average probability, over steps $i$ in the phase, that the released hypothesis is $1$ on $k'$  is at least $0.001/k$. 

We can now express the expected number of mistakes by the algorithm:
\[
\sum_{j=1}^{k} \sum_{i=(j-1)k+1}^{jk} \frac{1}{k} \sum_{k'\in X_j} p_i^{(k')} \ge \Omega(\sqrt{T}).
\]

\paragraph*{Case 2.} There exists a phase $j\in [k]$ and point $k' \in X_j$ such that
\begin{equation} \label{case2cond:eq}
 \sum_{i=(j-1)k+1}^{jk} p_i^{(k')} \le 0.001.
\end{equation}

In this case, we first derandomize the query distribution. 

Consider the query sequence and the event that it does not include examples of the form $(k', 0)$, that is, for all steps $i$ in phase $j$, $x_i \not= k'$.
This event has probability 
$(1-\frac{1}{k})^{k} \geq 1/4$.  
Therefore, from~\eqref{case2cond:eq} it must hold that when conditioning on the event that $k'$ is not present we have
\begin{equation}
    \sum_{i=(j-1)k+1}^{jk} \Pr[h_i(k') = 1] \le 0.001/0.25 \leq 0.01.
\end{equation}
This implies that with probability at least $0.99$ over sequences conditioned on the event, it holds that 
$h_i(k') = 0$ for all of the steps $i\in [(j-1)k+1,jk]$.

Therefore we can fix a sequence for which $k'$ is not present and with probability 0.99, $h_i(k')=0$ on all steps.

We now apply the construction of Algorithm~\ref{algo:construction} to insert $\Omega(\log(k)) = \Omega(\log(T))$ queries of the form $(k',1)$ into the sequence, such that the probability that $h_i(k') = 0$ throughout is still $\Omega(1)$. This in particular means that the algorithm makes $\Omega(\log(T))$ mistakes on the new query sequence, as desired.
\end{proof}

\section{JDP Algorithm for the Mirror Problem} \label{sec:JDPmirror}

\begin{algorithm2e}[t] \label{algo:JDPmirror}
 %\LinesNumbered
    \caption{\texttt{JDP-Mirror}}
    \DontPrintSemicolon 
    \KwIn{Privacy parameters $\eps\in (0,2],\delta\in(0,0.1)$, Delay parameter $K\geq 1$}
    \tcp{Initialization}
    $\eps' \gets O(\eps^2/\log(1/\delta))$;
    $\delta' \gets O(\delta) $\tcp*{As in Lemma~\ref{JDPMirror_privacy:lemma}} $C \gets K$\;
    \For(\tcp*[f]{Main Loop}){$t=1,\ldots$}
    {
      Receive an input bit $\Delta_t\in \{0,1\}$\;
      \eIf(\tcp*[f]{Process input $1$}){$\Delta_t=1$}{$C \gets C+1$\;
      $r \gets \Bern(\pi_{\eps',\delta'}(\max\{0,C-K\}))$\tcp*{Where $(\pi(i))_i$ are as in Claim~\ref{pi:claim}}
      \eIf{$r=1$}{Output $y_t \equiv \top$} 
      {Output $y_t \equiv \bot$}
      }{Output $y_t \equiv \bot$ \tcp*{Input is $\Delta_t=0$}}
}
\end{algorithm2e}

In this section we present a construction for (a variant of) the threshold monitor problem that satisfies a weaker variant of differential privacy known as {\em joint differential privacy (JDP)} and has an error that is independent of $T$. 
We will use this construction in Section~\ref{JDPpoint:sec} as a building block of an online prediction algorithm for point functions.
We first recall the definition of JDP.

\begin{definition} [$(\eps,\delta)$-indistinguishable]
For two random variables $X_1,X_2$ 
we write $X_1 \approx_{\eps,\delta} X_2$ if 
 for $b\in \{0,1\}$
\[
\max_{S\subseteq\supp(X^{b})} \ln\left[\frac{\Pr[X^b\in S]-\delta}{\Pr[X^{1-b}\in S]} \right] \leq \eps.
\]
We will interchangeably use notation for random variables and distributions.
\end{definition}

\begin{definition}[Joint Differential Privacy \cite{KearnsPRRU15}]
Let $\MMM:X^n\rightarrow Y^n$ be a randomized algorithm that takes a dataset $S\in X^n$ and outputs a vector $\vec{y}\in Y^n$. Algorithm $\MMM$ satisfies $(\eps,\delta)$-joint differential privacy (JDP) if for every $i\in[n]$ and every two datasets $S,S'\in X^n$ differing only on their $i$th point it holds that 
$\MMM(S)_{-i}\approx_{\eps,\delta}\MMM(S')_{-i}$. Here $\MMM(S)_{-i}$ denotes the (random) vector of length $n-1$ obtained by running $(y_1,\dots,y_n)\leftarrow \MMM(S)$ and returning $(y_1,\dots,y_{i-1},y_{i+1},\dots,y_n)$.
\end{definition}

\begin{problem}[The \texttt{Mirror} Problem]
For a sequence of input bits $(\Delta_t)_{t\in [T]}$ where $\Delta_t\in\{0,1\}$, the goal is to mirror the input, returning $y_t=\bot$ for $\Delta_t=0$ and returning $y_t=\top$ for $\Delta_t=1$. 
The constraints are that for a specified delay parameter $K\geq 1$:
\begin{itemize}
\item
   For $\Delta_t=0$ we must return $y_t=\bot$.
\item    
    We may return $y_t=\top$ only if $\sum_{i=1}^{t-1} \Delta_t \geq K$, that is,    
    there were at least $K$ prior steps with input $\Delta_t=1$. 
\end{itemize}
The goal is to minimize the number of \emph{mistakes}. A mistake occurs at time $t$ if
$\Delta_t=1$ and  $\sum_{i=1}^t \Delta_t > K$, and
$y_t = \bot$.
\end{problem}

For privacy parameters $(\eps,\delta)$ and delay parameter $K\geq 1$ we construct an $(\eps,\delta)$-JDP
algorithm for the \texttt{Mirror} problem.  The construction is described in 
Algorithm~\ref{algo:JDPmirror}.
The algorithm uses the probabilities specified in the following claim:
\begin{claim}\label{pi:claim}
For $\eps>0$ , $\delta\in (0,0.1)$ there are probabilities 
$\pi_{\eps,\delta}(i)$ ($i\geq 0$) and $L \leq \lceil \frac{1}{\eps}\ln((e^\eps -1)/(2\delta)) \rceil$ as follows:
\begin{itemize}
\item $\pi_{\eps,\delta}(0) = 0$, $\pi_{\eps,\delta}(1) = \delta$, $\pi_{\eps,\delta}(2L) = 1-\delta$
\item For $i>2L$, $\pi_{\eps,\delta}(i) = 1$
\item For $1\leq i < 2L$, $\Bern[\pi_{\eps,\delta}(i)] \approx_\eps \Bern[\pi_{\eps,\delta}(i+1)]$.
\end{itemize}
\end{claim}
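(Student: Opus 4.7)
The plan is to construct the sequence $(\pi_{\eps,\delta}(i))_{i\ge 0}$ explicitly via a piecewise-geometric interpolation that is symmetric around the midpoint, and then verify the three listed properties in turn.

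Concretely, I would set, for $1 \le i \le L$,
\[
\pi_{\eps,\delta}(i) \;=\; \delta \cdot e^{\eps(i-1)},
\]
and, for $L < i \le 2L$,
\[
\pi_{\eps,\delta}(i) \;=\; 1 - \delta \cdot e^{\eps(2L-i)},
\]
with the boundary conventions $\pi_{\eps,\delta}(0) = 0$ and $\pi_{\eps,\delta}(i) = 1$ for $i > 2L$. The first two bullets of the claim then hold by inspection. Note that $\eps$-indistinguishability is only required for $1 \le i < 2L$, so the endpoint ``jumps'' at $i=0$ and $i=2L$ are allowed to violate pure $\eps$-DP; these are paid for by the $\delta$ parameter in the ambient $(\eps,\delta)$-JDP analysis of Algorithm~\ref{algo:JDPmirror}.

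For the third bullet, recall that pure $\eps$-indistinguishability $\Bern(p) \approx_\eps \Bern(q)$ with $p < q$ is equivalent to the pair of ratio inequalities $q \le e^\eps p$ and $1 - p \le e^\eps (1-q)$. I would verify these in three cases. In the lower geometric region ($1 \le i < L$), the first inequality holds with equality by construction, and a direct rearrangement reduces the second to $\pi_{\eps,\delta}(i)(e^\eps+1) \le 1$, i.e.\ $\pi_{\eps,\delta}(i) \le 1/(e^\eps+1)$. Since $\pi_{\eps,\delta}$ is increasing in this range, it suffices to control the largest such value, $\pi_{\eps,\delta}(L-1) = \delta e^{\eps(L-2)}$. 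The symmetric upper region ($L < i < 2L$) is handled by the substitution $\pi \leftrightarrow 1-\pi$.

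The crux is the single ``middle step'' at $i = L$: writing $x := \pi_{\eps,\delta}(L) = \delta e^{\eps(L-1)}$ so that $\pi_{\eps,\delta}(L+1) = 1 - x$, both indistinguishability inequalities simplify to $(1-x)/x \le e^\eps$, equivalently $x \ge 1/(e^\eps+1)$. This single condition determines $L$: we choose the smallest integer $L$ for which $\delta e^{\eps(L-1)} \ge 1/(e^\eps+1)$, and a short manipulation using $(e^\eps+1)(e^\eps-1) = e^{2\eps}-1$ recovers the stated bound. The main obstacle is reconciling the lower bound $\pi_{\eps,\delta}(L) \ge 1/(e^\eps+1)$ (for the middle step) with the upper bound $\pi_{\eps,\delta}(L-1) \le 1/(e^\eps+1)$ (for the lower region); these are automatically compatible because successive values in the geometric region differ by exactly the factor $e^\eps$, which matches the required slack.
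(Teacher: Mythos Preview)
Your construction is essentially the same idea as the paper's: a symmetric piecewise-geometric ramp, with the lower half growing from $\delta$ and the upper half its reflection $1-\pi(2L+1-i)$. The paper's only twist is that it first fixes $L$ via the cumulative condition $\delta\sum_{i=0}^{L-1}e^{i\eps}>1/2$ and then \emph{shrinks the exponent} to some $\eps'\le\eps$ so that the sum equals exactly $1/2$ (forcing the midpoint to sit at $1/2$), whereas you keep the exponent at $\eps$ and let the midpoint value $x=\pi(L)$ float. Both choices work; the paper's version makes the middle step trivially symmetric, while yours avoids introducing a second parameter.

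One small gap to patch in your middle-step analysis: you assert that ``both indistinguishability inequalities simplify to $(1-x)/x\le e^\eps$'', but this uses your standing assumption $p<q$, i.e.\ $x<1/2$, which you never verify. In fact the full condition for $\Bern(x)\approx_\eps\Bern(1-x)$ is $1/(e^\eps+1)\le x\le e^\eps/(e^\eps+1)$, and the upper inequality does not come for free. It \emph{does} follow from your minimality choice of $L$ (since $\pi(L-1)=x/e^\eps<1/(e^\eps+1)$ gives $x<e^\eps/(e^\eps+1)$), but you should say so explicitly. Once that is noted, your argument is complete and matches the paper's up to the cosmetic $\eps$-vs-$\eps'$ difference.
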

\begin{proof}
    Let
    \begin{align*}
       L = \arg\min j \ , \delta \sum_{i=0}^{j-1} e^{i\eps} > 1/2\ . 
    \end{align*}
    Let $\eps'\leq \eps$ be the solution of
$\delta \sum_{i=0}^{L-1} e^{i\eps'} = 1/2$.
    Define
    \begin{align*}
        \text{For $1\leq\i \leq L$: } & \pi(i) = \delta e^{(i+1) \eps'} \\
        \text{For $L+1\leq\i \leq 2L$: } & \pi(i) = \delta e^{(2L-i) \eps'} \ .
    \end{align*}
    
\end{proof}

\begin{lemma} [Utility of \texttt{JDP-Mirror}]
 Algorithm~\ref{algo:JDPmirror} correctly solves the \texttt{Mirror} problem for delay parameter $K$ while incurring at most $2L= O(\eps^{-2}\log^2(1/\delta))$ mistakes.
\end{lemma}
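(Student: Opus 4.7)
The plan is to use the structure of the probability ladder $\pi_{\eps',\delta'}(\cdot)$ supplied by Claim~\ref{pi:claim} in order to bound the mistake count deterministically. First I would dispose of $0$-inputs: the \textbf{else} branch of Algorithm~\ref{algo:JDPmirror}'s main loop outputs $\bot$ whenever $\Delta_t = 0$, so that part of the \texttt{Mirror} requirement is satisfied immediately and the entire analysis reduces to the behavior on $1$-inputs.

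Writing $j$ for the running count of $1$-inputs, the pseudocode evaluates $\pi_{\eps',\delta'}$ at $\max\{0, C-K\}$, which (by tracking $C$ through the main loop) is a deterministic function of $j$ alone. The key fact from Claim~\ref{pi:claim} is that $\pi_{\eps',\delta'}(i) = 1$ for every $i > 2L$, so once the argument passed to $\pi_{\eps',\delta'}$ exceeds $2L$, the algorithm \emph{deterministically} outputs $\top$ and no further mistake can occur. Since a mistake requires $\Delta_t = 1$ together with $\sum_{i=1}^{t}\Delta_i > K$, the number of $1$-inputs on which a mistake can occur is therefore bounded by the number of indices $i \in \{1,\dots,2L\}$ at which $\pi_{\eps',\delta'}(i) < 1$, i.e.\ by $2L$. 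This yields the mistake bound $2L$ with probability one, regardless of the input stream.

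To convert $2L$ into the quantitative form in the statement, I would plug in the privacy parameters the algorithm actually uses, namely $\eps' = O(\eps^{2}/\log(1/\delta))$ and $\delta' = O(\delta)$. Claim~\ref{pi:claim} gives $L \leq \lceil \frac{1}{\eps'}\ln((e^{\eps'}-1)/(2\delta')) \rceil = O(\eps'^{-1}\log(1/\delta'))$, so this evaluates to $L = O\!\left(\frac{\log(1/\delta)}{\eps^{2}}\cdot \log(1/\delta)\right) = O(\eps^{-2}\log^{2}(1/\delta))$, whence $2L = O(\eps^{-2}\log^{2}(1/\delta))$ as required. The step that deserves care is the bookkeeping between $C$, the offset $K$, and the running index of $1$-inputs, in order to verify that the argument to $\pi_{\eps',\delta'}$ saturates past $2L$ after at most $2L$ ``risky'' $1$-inputs have been processed; once this correspondence is pinned down, the mistake bound is a purely combinatorial consequence of the saturation of $\pi_{\eps',\delta'}$ at $1$ past index $2L$ guaranteed by Claim~\ref{pi:claim}, and no probabilistic tail argument is required.
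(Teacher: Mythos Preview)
Your approach is essentially the same as the paper's: both argue that $0$-inputs always yield $\bot$, that the first $K$ one-inputs cannot produce $\top$ (since $\pi(0)=0$), that once $C-K>2L$ we deterministically output $\top$, and hence mistakes are confined to the window $K+1\le C\le K+2L$ of size $2L$. The one thing you leave implicit that the paper states explicitly is the verification of the delay constraint itself (no $\top$ during the first $K$ one-inputs, since the argument to $\pi$ is $0$ there); make sure your ``bookkeeping'' paragraph actually spells this out, since it is part of ``correctly solves the \texttt{Mirror} problem'' in the lemma statement.
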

\begin{proof}
    On time steps where $\Delta_t=0$ the algorithm always returns $y_t=\bot$, so there are no mistakes. On the first $K$ time steps with $\Delta_t=1$ we have $C\leq K$ and thus $r= 0$ and hence $y_t=\bot$, which is as specified. At time steps where $C>K+2L$ and $\Delta_t=1$ we have $r=1$ and the algorithm always reports $y_t=\top$ so there are no mistakes.
    Mistakes are possible only when $\Delta_t=1$ and only when $K+1\leq C\leq K+2L$ so there are at most $2L = O((1/\eps')\log(1/\delta'))$ mistakes.  
\end{proof}

\begin{lemma} [JDP of \texttt{JDP-Mirror}] \label{JDPMirror_privacy:lemma}
 Algorithm~\ref{algo:JDPmirror} is $(\eps,\delta)$-JDP.
\end{lemma}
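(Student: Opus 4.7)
The plan is to exploit two structural features of Algorithm~\ref{algo:JDPmirror}: every step draws a fresh, independent coin, and the Bernoulli parameter $\pi_{\eps',\delta'}(C_t-K)$ at step $t$ is a deterministic function of the input prefix $\Delta_{1..t}$ alone. Fix neighboring inputs $\Delta,\Delta'$ differing only at an index $i^*$, and without loss of generality take $\Delta_{i^*}=1,\;\Delta'_{i^*}=0$. Then the joint law of the output sequence factorizes as a product of per-step Bernoullis. After removing the $i^*$-th coordinate, the two resulting product distributions $\MMM(\Delta)_{-i^*}$ and $\MMM(\Delta')_{-i^*}$ coincide exactly on every step $t<i^*$ (identical counters) and on every step $t>i^*$ with $\Delta_t=0$ (both return $\bot$ deterministically). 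The JDP claim thus reduces to bounding the indistinguishability of two \emph{independent} Bernoulli products that differ only on the post-$i^*$ steps at which $\Delta_t=1$.

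For the per-step analysis, let $m$ denote the number of $1$'s in $\Delta_{1..i^*-1}$ and enumerate the post-$i^*$ ``$1$''-steps as $t_1<t_2<\cdots$. A direct count shows that at $t_j$ the $\Delta'$-run draws $\Bern(\pi_{\eps',\delta'}(m+j))$ while the $\Delta$-run draws $\Bern(\pi_{\eps',\delta'}(m+j+1))$, so the compared parameter pairs $(\pi(i),\pi(i+1))$ traverse the consecutive integers $i=m+1,m+2,\ldots$ with $i\ge1$ throughout. By Claim~\ref{pi:claim} each such pair is $\eps'$-indistinguishable as Bernoullis for $i\in[1,2L-1]$, is $(0,\delta')$-indistinguishable at the single boundary $i=2L$, and is \emph{identical} (both point masses on $\top$) once $i\ge 2L+1$; the potentially problematic boundary $i=0$ never arises. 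In total at most $2L-1$ of the post-$i^*$ steps contribute a non-trivial $\eps'$ term, at most one step contributes a $\delta'$ term, and every other step is identical in distribution under $\Delta$ and $\Delta'$.

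It then remains to compose these per-step bounds. Because the per-step coins are mutually independent, the strong (advanced) composition theorem applied to the at-most-$2L-1$ independent $(\eps',0)$-DP Bernoulli pairs, combined with absorbing the single boundary step through the $(\eps,\delta)$ triangle inequality, yields overall $(\eps_{\mathrm{tot}},\delta_{\mathrm{tot}})$-indistinguishability with $\eps_{\mathrm{tot}}=O\bigl(\sqrt{L\log(1/\delta)}\,\eps'+L\eps'^{2}\bigr)$ and $\delta_{\mathrm{tot}}=\delta+\delta'$. Substituting $L=O(\eps'^{-1}\log(1/\delta'))$ together with the algorithm's initialization values $\eps'=\Theta(\eps^{2}/\log(1/\delta))$ and $\delta'=\Theta(\delta)$, a routine calculation gives $\eps_{\mathrm{tot}}\le\eps$ and $\delta_{\mathrm{tot}}\le\delta$, which is the desired JDP guarantee. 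The main obstacle is not conceptual but accounting: one must pin down the exactly-one boundary step contributing $\delta'$, verify that the $\sqrt{L}\,\eps'$ blowup of advanced composition matches $\eps$ under the stated parameter choices (this is the reason $\eps'$ is tuned as $\eps^2/\log(1/\delta)$), and exploit the fact that the ``active window'' $i\in[1,2L]$ on which $\pi$ is genuinely probabilistic has width independent of the time horizon $T$ --- it is precisely this $T$-independence that lets JDP circumvent the $\Omega(\log T)$ lower bound of Theorem~\ref{monitorLB:thm}.
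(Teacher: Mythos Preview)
Your approach mirrors the paper's: factor the output as a product of independent per-step Bernoullis, observe that only the post-$i^*$ ``$1$''-steps can differ and there the parameter shifts from $\pi(i)$ to $\pi(i+1)$, classify these shifts via Claim~\ref{pi:claim} as $\Theta(L)$ many $(\eps',0)$-indistinguishable pairs plus $O(1)$ many $(0,\delta')$ pairs, and compose.

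One accounting step does not close as written. With $L=\Theta\bigl((1/\eps')\log(1/\delta')\bigr)$, advanced composition over $\Theta(L)$ many $(\eps',0)$-DP coordinates gives $\eps_{\mathrm{tot}}=\Theta\bigl(\sqrt{\eps'}\,\log(1/\delta)\bigr)$; substituting your stated $\eps'=\Theta(\eps^2/\log(1/\delta))$ then yields $\Theta\bigl(\eps\sqrt{\log(1/\delta)}\bigr)$, not $\eps$. The paper runs into exactly this obstacle: it first records that advanced composition alone would force $\eps'=\Theta(\eps^2/\log^2(1/\delta))$, and then justifies the tighter algorithm value $\eps'=\Theta(\eps^2/\log(1/\delta))$ via the Target Charging Technique, exploiting that the expected number of ``target hits'' along the ramp $\pi(1),\dots,\pi(2L)$ is only $O(1/\eps')$ rather than $\Theta(L)$. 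So your ``routine calculation'' either needs the smaller $\eps'$ or an appeal to TCT. Separately, the paper's case analysis (with the counter read as initialized to $0$, so that the argument to $\pi$ is $\max\{0,\text{count}-K\}$) identifies \emph{two} $(0,\delta')$ boundary steps --- one at the $(\pi(0),\pi(1))$ transition and one at $(\pi(2L),\pi(2L+1))$ --- whereas you budget only for the latter; this does not affect the asymptotics but matters for the final $\delta$ constant.
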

\begin{proof}
Fix $i^*\in[T]$ and fix two neighboring input sequences 
$\vec{\Delta},\vec{\Delta}'\in\{0,1\}^T$
that differ in position $i^*$. 
Observe that coordinates $t$ s.t.\ $\Delta_t=0$ (and hence when $t\not=i^*$, $\Delta'_t=0$) have no effect on either of the executions. Hence, we may ignore these coordinates. Therefore, without loss of generality, we may assume that 
$\vec{\Delta}\equiv \boldsymbol{1}$, and that $\vec{\Delta}'$ is the vector of all 1 except for the $i^*$th coordinate for which $\Delta'_{i^*}=0$. 

Consider now for each $t\in [T]\setminus\{i^*\}$ the random variables $Y_t$ that is the output at time $t$ when the input is $\vec{\Delta}$ and the random variable $Y'_t$ that is the output at time $t$ when the input is $\vec{\Delta}'$. Note that for a fixed input, the random variables of different time steps are independent.
Denote $P_t = \Pr[Y_t = \top]$ and $P'_t = \Pr[Y'_t = \top]$.  
We will show that there can be at most $2L$ time steps  $t\in [T]$ where $P_t \not= P'_t$ and that  
out of these steps, there can be at most two where $|P_t-P'_t|\leq \delta'$ (and thus $Y_t\approx_{0,\delta'}Y'_t$) and at most $2L-2$ 
where $Y_t\approx_{\eps'}Y'_t$.
If we apply advanced composition over time steps $t\in [T]\setminus\{i^*\}$
we obtain that
$\boldsymbol{Y}_{-i^*} \approx_{\eps,\delta} \boldsymbol{Y}_{-i^*}$ where
\begin{align*}
\eps &= \frac{1}{2} (2L-2) (\eps')^2 + \eps' \sqrt{2(2L-2) \log(1/\delta'')} = O(\sqrt{\eps'} \log(1/\delta''))\\
\delta &= 2\delta'+\delta''\ .
\end{align*}
Therefore if we wish for final privacy parameters $\eps,\delta$ we need to choose 
$\eps' = O((\eps)^2 \log^{-2}(1/\delta))$ and $\delta' = O(\delta)$.

We can tighten the privacy analysis using the Target Charging Technique~\citep{TCT:Neurips2023}.  We note that the $2T-2$ steps that can differ between the dataset have $\top$ probabilities $\pi_{\eps',\delta'}(j)$ for $j\in[2L]$. With TCT privacy analysis we only count \emph{target hits}. The expected number of target hits is
$\sum_{j\in [2L]} \min\{\pi_{\eps',\delta'}(j), 1-\pi_{\eps',\delta'}(j) \} = O(1/\eps')$. We obtain a composition over $O(\max\{1/\eps',\log(1/\delta)\})$ steps (the $\log(1/\delta)$ is the overhead of TCT). Applying advanced composition and solving for $\eps'$ we obtain:
$\eps' = O((\eps)^2 \log^{-1}(1/\delta))$ and $\delta' = O(\delta)$.

It remains to establish that the claimed property holds. We do so via case analysis. Denote by
$C_t = \sum_{i=1}^t \Delta_i$ and $C'_t= \sum_{i=1}^t \Delta'_i$ the value of the counter $C$ in Algorithm~\ref{algo:JDPmirror} at step $t$ for inputs $\boldsymbol{\Delta}$ and $\boldsymbol{\Delta'}$ respectively. Observe that we always have $C_t-1\leq C'_t \leq C_t$.
\paragraph{Case $i^* \leq K+1$}
\begin{itemize}    
\item
For $t\leq K$, $C'_t\leq C_t\leq K$ and hence $P_t=P'_t=\pi_{(\eps',\delta')}(0)=0$. 
\item 
For $t=K+1$ (relevant when $i^*<K+1$), $C_t = K+1$ and hence $P_t = \pi_{(\eps',\delta')}(C_t-K)= \pi_{(\eps',\delta')}(1)= \delta'$ and $C'_t=K$ and hence $P'_t=\pi_{(\eps',\delta')}(0)=0$.
\item 
For $K+1 < t \leq K+2L$, $K+2L \geq C_t = C'_t+1 \geq K+2$
That is $2\leq C_t-K \leq 2L$ and $1\leq C'_t-K \leq 2L-1$ and hence by Claim~\ref{pi:claim}, $Y_t \approx_{\eps'} Y'_t$. 
\item
For $t=K+2L+1$, we have $C_t=K+2L+1$ and $C'_t = K+2L$ and hence $P_t=1$ and $P'_t = 1-\delta'$.
\item 
 For $t\geq K+2L+1$, $P_t=P'_t=1$.
\end{itemize}

\paragraph{Case $K+1 < i^* \leq K+2L+1$}
\begin{itemize}
\item
For $t<i^*$,  $P_t = P'_t$.
     \item
For $i^*<t \leq K+2L$, we have $K+2 \leq C_t = C'_t+1 \leq K+2L$ and $2\leq C_t-K\leq K+2L$ and hence by Claim~\ref{pi:claim}, $Y_t \approx_{\eps'} Y'_t$. 
  \item 
 For $t= K+2L+ 1$ (relevant when $i^*< K+2L+1$), we have $C_t = K+2L+1$ and hence $P_t=1$ and $C'_t= K+2L$ and hence $P'_t=1-\delta'$.
 \item
If $t > K+2L+1$, then $C'_t-K > 2L$ and $C_t-K > 2L+1$ and therefore $P_t=P'_t=1$.
\end{itemize}

\paragraph{Case $i^* >  K+2L+1$}
In this case for all $t\leq K+2L+1$, $C_t=C'_t$ and hence $P_t=P'_t$ and for all $t> K+2L+1$,  $C_t-K > 2L+1$ and $C'_t-K > 2L$ and hence $P_t=P'_t=1$.

Summarizing all this, in each of the cases we have a composition of at most $2L-2$ $\eps'$-indistinguishable random variables and at most two $(0,\delta')$-indistinguishable random variables.
\end{proof}

\section{Private Online Prediction for Point Functions} \label{JDPpoint:sec}

In this section we present a construction for a private online predictor for point functions, obtaining mistake bound that is independent of the time horizon $T$. This separates private online learning from private online prediction. 
Specifically, we leverage Algorithm~\ref{algo:JDPmirror} to construct an efficient online learner for point functions in the private prediction model. The construction is described in Algorithm~\ref{algo:onlinepredictor}.

\begin{algorithm2e}[t] \label{algo:onlinepredictor}
 %\LinesNumbered
    \caption{Online Predictor for Point Functions}
    \DontPrintSemicolon 
    %\KwIn{}
    %\KwResult{Write here the result }
    %\For{$x\in\XXX$}{$n_x \gets 0$}

    \begin{enumerate}[itemsep=0px]
    \item Instantiate \texttt{JDP-Mirror} (Algorithm~\ref{algo:JDPmirror}) with privacy parameters $\eps,\delta$ and with the delay $K=20 \cdot k$, where $k=\Theta(\frac{1}{\eps}\log\frac{1}{\delta})$.
    
    \item Set ${\rm Flag}=0$ and ${\rm Count}=0$.
        \item For time $t=1,2,\dots,T$ do:
\begin{enumerate}
    \item Obtain an input point $x_t$.
    \item If ${\rm Flag}=0$ then:
    \begin{enumerate}
        \item Output 0 and obtain the ``true'' label $y_t$.
        \item Set ${\rm Count}\leftarrow{\rm Count}+y_t$.
        \item If ${\rm Count}\geq k$ then
        \begin{enumerate}

            \item\label{step:histo} Run an $(\eps,\delta)$-DP algorithm for sparse histograms to identify a point $x^*\in\XXX$ that appears $\Omega(\frac{1}{\eps}\log(1/\delta))$ times in the multiset $\{ x_j : j\in[t] \text{ and } y_j=1\}$. If the algorithm failed to identify such a point then we set $x^*\gets \star$ for some special symbol $\star\notin\XXX$. See, e.g., \cite{sparsehistograms:ITCS2016,CohenGeriSarlosStemmer:ICML2021}.

            \item\label{step:mirrorPast} For $j\in[t]$, feed \texttt{JDP-Mirror} the bit $I\left(x_j,x^*\right)$, where $I(a,b)=1$ iff $a=b$.
            \item\label{step:Flag} If all the answers given by \texttt{JDP-Mirror} in the previous step were $\bot$ then set ${\rm Flag}=1$. Otherwise set ${\rm Flag}=3$.

        \end{enumerate}
    \end{enumerate}
    \item Else if ${\rm Flag}=1$ or ${\rm Flag}=2$ then:
    \begin{enumerate}
        
    \item Feed \texttt{JDP-Mirror} the bit $I(x_t,x^*)$ to obtain $a_t\in\{\top,\bot\}$. 

    \item If ${\rm Flag=1}$ and $a_t=\top$ then
    \begin{itemize}

        \item\label{step:L} Let $L\subseteq[t]$, where $|L|\leq10k$, be the set containing the positions of the {\em first} $10k$ occurrences of $x^*$ till time $t$.
        
        \item\label{step:fake} Let ${\rm Fake}\leftarrow |\{j\in L : y_j=0\}|+\Lap(\frac{1}{\eps})$. 
                \item If ${\rm Fake}\geq\frac{1}{\eps}\log\frac{1}{\delta}$ then set ${\rm Flag}=3$ and output 0. 
                \item Otherwise set ${\rm Flag}=2$ and output the bit $I(a_t,\top)$.
    \end{itemize}
    
    \item Else output the bit $I(a_t,\top)$.

    \item Obtain the ``true'' label $y_t$. 

\end{enumerate}

    \item Else if ${\rm Flag}=3$ then output 0.

    \end{enumerate}
    \end{enumerate}
\end{algorithm2e}

\begin{lemma}\label{lem:pointLearningEps2} [Restatement of Lemma~\ref{lem:onlinePredictionPoints}]
    Let $\mathcal{X}$ be a domain. The class of point functions  over $\mathcal{X}$ admits an $(\eps, \delta)$-DP online predictor with mistake bound $O(\frac{1}{\eps^2}\log^2\frac{1}{\delta})$.
\end{lemma}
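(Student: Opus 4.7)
The plan is to analyze Algorithm~\ref{algo:onlinepredictor} separately for its utility (mistake bound) and for its privacy in the online prediction model, leveraging the three DP primitives it composes: the sparse histograms algorithm in Step~\ref{step:histo}, the JDP-Mirror of Algorithm~\ref{algo:JDPmirror}, and the Laplace-noised ``fake count'' in Step~\ref{step:fake}.

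For the utility bound I would partition the execution into phases indexed by the current value of \emph{Flag}. In the Flag$=0$ phase the algorithm outputs $0$ on every query, so each $1$-label contributes one mistake; the phase ends once Count reaches $k=\Theta(\frac{1}{\eps}\log\frac{1}{\delta})$, contributing $k$ mistakes. Because the labels come from some true point function $h_{x^*_{\text{true}}}$, all $k$ observed $1$-labeled queries lie on the same point, so a standard $(\eps,\delta)$-DP sparse histograms primitive with threshold $\Omega(\frac{1}{\eps}\log(1/\delta))$ recovers $x^*=x^*_{\text{true}}$ except with probability $\delta$. In the catch-up Step~\ref{step:mirrorPast}, JDP-Mirror is fed at most $k$ ``one''-bits, and the parameter choice ensures $k \ll L=O(\frac{1}{\eps^2}\log^2\frac{1}{\delta})$ and $k\eps'=O(1)$; a union bound over the $\pi_{\eps',\delta'}(i)$ of Claim~\ref{pi:claim} (which grow geometrically starting from $\delta'$) gives $\sum_{i=1}^k\pi_{\eps',\delta'}(i)=O(k\delta)\ll 1$, so all catch-up outputs are $\bot$ with high probability and Flag correctly transitions to $1$. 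Under a valid labeling the fake count is deterministically zero, so $\Pr[\Lap(1/\eps)\geq\frac{1}{\eps}\log(1/\delta)]\leq\delta$ bounds the probability that Flag$=3$ is erroneously triggered. Once Flag$\in\{1,2\}$, further mistakes arise only from the JDP-Mirror ramp-up and are bounded by $2L=O(\frac{1}{\eps^2}\log^2\frac{1}{\delta})$. Summing the three contributions gives the claimed $O(\frac{1}{\eps^2}\log^2\frac{1}{\delta})$ mistake bound.

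For privacy in the online prediction model, I would show that the output stream excluding time $i$ is $(\eps,\delta)$-DP in the $i$-th input by splitting the privacy budget across the three primitives. Outputs emitted while Flag$\in\{0,3\}$ are deterministically $0$ and are post-processings of no sensitive quantity. The remaining outputs depend on: (i) the $(\eps/3,\delta/3)$-DP sparse histograms call in Step~\ref{step:histo}, whose input multiset changes by at most one element under the neighbor relation, so $x^*$ (and hence all derived outputs) is DP by post-processing; (ii) the $(\eps/3,\delta/3)$-JDP Mirror instance whose output at time $i$ is omitted from the privacy requirement, matching precisely the online prediction definition; and (iii) the Laplace-noised fake count, which has sensitivity~$1$ in the $i$-th input and is $(\eps/3,0)$-DP. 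Basic composition across the three DP mechanisms then yields the required $(\eps,\delta)$-DP guarantee.

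The main obstacle I anticipate is carefully accounting for the cascade of effects of a single input swap: it can shift the time at which Count crosses $k$, alter one element of the histogram input, flip one bit fed to JDP-Mirror, change whether $i$ lies in the set $L$ of Step~\ref{step:L}, and perturb the fake count. Handling this requires verifying bounded sensitivity for each block individually and arguing that the decision to transition to Flag$=3$ is itself the output of a DP test (the noisy fake count) and hence incurs no additional privacy cost. Intuitively, the Flag$=3$ branch functions as a safety valve that simultaneously caps both the privacy loss and the number of mistakes when the input stream is inconsistent with any true point function; correctness on valid inputs plus the bounded-sensitivity analysis together close the argument.
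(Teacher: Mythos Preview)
Your utility analysis is essentially correct and tracks the paper's argument. One small correction: the reason the catch-up in Step~\ref{step:mirrorPast} yields only $\bot$ on a valid input is that the delay parameter is $K=20k$, so with at most $k$ ones fed we have $C\le k<K$ and $\pi(0)=0$ deterministically; there is no need to sum the $\pi$'s or invoke $L$.

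The privacy argument has a real gap. You assert that outputs during $\mathrm{Flag}\in\{0,3\}$ are ``post-processings of no sensitive quantity'' and that basic composition over the three primitives suffices. But the algorithm's output vector is \emph{not} a post-processing of the three DP outputs alone: the Flag state also depends directly on the raw stream---on the time Count hits $k$, and on whether JDP-Mirror returned $\top$ during catch-up in Step~\ref{step:Flag}. The latter Flag$=3$ transition is triggered by JDP-Mirror, not by the noisy Fake test you cite, and ``bounded sensitivity for each block'' does not by itself turn a data-dependent control flow into a DP one. You correctly identify the cascade as the obstacle, but do not resolve it.

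The paper closes this gap with a \emph{simulator} argument. A simulator receives only three DP answers from a Helper: $x^*$ from the histogram, the JDP-Mirror transcript on $(I(x_j,x^*))_{j=1}^T$ excluding position $i$, and $\widehat{\mathrm{Fake}}$ computed on the first $10k$ occurrences of $x^*$. It then outputs the all-zero vector if $\widehat{\mathrm{Fake}}$ exceeds the threshold, else the JDP-Mirror vector. The crux is showing this matches the real algorithm up to $O(\delta)$ statistical distance, via two high-probability events: (i) the first $\top$ from JDP-Mirror arrives only after $\ge 10k$ occurrences of $x^*$, so the algorithm's set $L$ and the simulator's $\hat L$ coincide and $\mathrm{Fake}=\widehat{\mathrm{Fake}}$; and (ii) if Flag$=3$ is set in Step~\ref{step:Flag}, then there were $\gtrsim 20k$ occurrences of $x^*$ but only $k$ positives, so $\ge 9k$ of the first $10k$ are labeled $0$ and $\widehat{\mathrm{Fake}}$ exceeds the threshold---making the simulator also output all zeros. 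This coupling is the missing technical step; once established, adaptive composition over the three queries yields the claimed $(\eps,\delta)$ guarantee.
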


\begin{proof}{\bf sketch}\;\;
The proof is via the construction given in Algorithm~\ref{algo:onlinepredictor}, which we denote here as algorithm $\AAA$. The utility analysis of this algorithm is straightforward. We only need to consider the case where the input is consistent with some point function, say with point $x^*$. In this case, we argue that the output is identical to that of \texttt{JDP-Mirror} with delay $K$ on inputs $\mathbf{1}(x_i=x^*)$, with $\top$ replaced by $1$ and $\bot$ replaced by $0$. Once $\approx\frac{1}{\eps}\log\frac{1}{\delta}$ positive labels have been obtained, we run the private histogram algorithm in order to identify the point $x^*$. Since the delay $K$ is much larger than $k$ we have $\text{Flag}=1$. (Note that if \texttt{JDP-Mirror} would have returned a $\top$ before that point, we know that the input is not consistent with a point function.) After \texttt{JDP-Mirror} returns the first $\top$ we get that $\text{Fake}=\Lap(\frac{1}{\eps})$ and the algorithm transitions to $\text{Flag}=2$. The output of the algorithm then follows \texttt{JDP-Mirror} and the number of mistakes that is at most $K+O(\eps^{-2}\log^2(1/\delta))= O(\eps^{-2}\log^2(1/\delta))$.

For the privacy analysis, fix two neighboring input streams that differ on the $i$th labeled example: 
\begin{align*}
    S_0&=((x_1,y,1),\dots,(x^0_i,y^0_i),\dots,(x_T,y_T))\\ S_1&=((x_1,y,1),\dots,(x^1_i,y^1_i),\dots,(x_T,y_T)) . 
\end{align*}

Let $b\in\{0,1\}$ and consider the execution of $\AAA$ on $S_b$. 
We now claim that the outcome distribution of $\AAA(S_b)$, excluding the $i$th output, can be simulated from the outcomes of three private computations on $b$.
Specifically, we consider a simulator $\Sim$ that interacts with a ``helping algorithm'', $\Helper$, who knows the value of the bit $b$ and answers DP queries w.r.t.\ $b$ as issued by $\Sim$.
The simulator $\Sim$  
first asks $\Helper$ to execute
the algorithm for histograms on the dataset containing the first $k$ positively labeled points in $S_b$ to obtain $x^*$. This satisfies $(\eps,\delta)$-DP.  The simulator then asks $\Helper$ for the outcome vector of executing \texttt{JDP-Mirror} on the stream $\left(I(x_j,x^*)\right)_{j=1}^{T}$. The $\Helper$ responds with the vector $\vec{b}\in\{\bot,\top,\emptyset\}^T$ (excluding the $i$th step). 
Next, similarly to Step~\ref{step:L} of the algorithm, let $\hat{L}$ denote the indices of the first $10k$ occurrences of $x^*$ in $S_b$. The simulator asks $\Helper$ for the value of 
$$
\widehat{{\rm Fake}}\leftarrow |\{j\in\hat{L}:y_j=0\}|+\Lap(\frac{1}{\eps}).
$$
This satisfies $(\eps,0)$-DP by the properties of the Laplace mechanism.

We now claim that using these three random elements ($x^*$, the outcome vector of \texttt{JDP-Mirror}, and $\widehat{{\rm Fake}}$), we can simulate the outcome of $\AAA(S_b)$ up to a statistical distance of at most $\delta$. Specifically, if $\widehat{{\rm Fake}}\geq\frac{1}{\eps}\log\frac{1}{\delta}$ then we output the all 0 vector, and otherwise we output the vector given by \texttt{JDP-Mirror} (where $\top$ is replaced by 1 and $\bot$ is replaced by 0). To see that this simulates the outcome of $\AAA(S_b)$, let us consider the following two events:
\begin{enumerate}
    \item The first $\top$ given by \texttt{JDP-Mirror} is such that the number of appearances of $x^*$ before it is at least $10k$. By the properties of \texttt{JDP-Mirror}, this event happens with probability at least $1-\delta$.
    \item The noise magnitude in $\widehat{{\rm Fake}}$ is at most $\frac{1}{\eps}\log\frac{1}{\delta}$. This event also holds with probability at least $1-\delta$ by the properties of the Laplace distribution.
\end{enumerate}

Now, whenever these two events hold, then the outcome of the simulator is identical to the outcome of $\AAA(S_b)$. More specifically, if the algorithm sets ${\rm Flag}=1$ in Step~\ref{step:Flag}, then the outcome of the simulator is syntactically identical to the output of the algorithm. Otherwise (if we set ${\rm Flag}=0$ in Step~\ref{step:Flag}), then there must be at least $9k$ occurrences of $(x^*,0)$ in $S_b$ before the $k$th occurrence of $(x^*,1)$. Hence, in this case we have that $\widehat{{\rm Fake}}\geq\frac{1}{\eps}\log\frac{1}{\delta}$. So both the algorithm and the simulator output the all zero vector.

Overall, we showed that by posing 3 queries to $\Helper$, each of which satisfies $(\eps,\delta)$-DP, our simulator can simulate the outcome of $\AAA(S_b)$ up to a statistical distance of at most $\delta$. This concludes the proof.
\end{proof}

\end{document}